\newcommand{\arxiv}[1]{\iftoggle{arxiv}{#1}{\cite[#1]{Proofs}}}
\newcommand{\w}[2]{\iftoggle{doublecol}{#1}{#2}}
\newcommand{\vone}[1]{}
\newcommand{\vdo}[1]{\iftoggle{doublecol}{\vspace{#1}}{}}
\theoremstyle{plain}
\newtheorem{thm}{Theorem}
\theoremstyle{plain}
\theoremstyle{definition}
\newtheorem{definition}{Definition}
\newtheorem{lemma}{Lemma}
\newtheorem*{lemma*}{Lemma}
\theoremstyle{remark}
\newtheorem{remark}{Remark}
\newcommand\K{\mathrm{K}}
\newcommand\U{\mathrm{U}}
\title{Optimal Spectrum Sharing with ARQ based \\ Legacy Users via Chain Decoding}
\author{
\textbf{Nicol\`o Michelusi\IEEEauthorrefmark{2}}, \emph{Senior Member, IEEE}
\\
{\small \IEEEauthorrefmark{2}School of Electrical and Computer Engineering,
Purdue University, USA,\\ \texttt{\small michelus@purdue.edu}}
\vdo{-10mm}
\vone{-15mm}
\thanks{This research has been funded in part by the grants 
NSF CNS-1642982 and DARPA \#108818. Part of this work appeared at ISIT'17, see \cite{MicheISIST17}.}
}
\begin{document}
\maketitle
\begin{abstract}
This paper investigates the design of access policies in spectrum sharing networks by exploiting the retransmission protocol of legacy primary users (PUs) to improve the spectral efficiency via opportunistic retransmissions at secondary users (SUs) and \emph{chain decoding}. The optimal policy maximizing the SU throughput under an interference constraint to the PU and its performance are found in closed form. It is shown that the optimal policy randomizes among three modes: \emph{Idle}, the SU \emph{remains idle} over the retransmission window of the PU, to avoid causing interference; \emph{Interference cancellation}, the SU transmits \emph{only after decoding the PU packet}, to improve its own throughput via interference cancellation; \emph{Always transmit}, the SU transmits over the retransmission window of the PU to maximize the future potential of interference cancellation via chain decoding. This structure is exploited to design a stochastic optimization algorithm to facilitate learning and adaptation when the model parameters are unknown or vary over time, based on ARQ feedback from the PU and CSI measurements at the SU receiver. It is shown numerically that, for  a 10\% interference constraint, the optimal access policy yields 15\% improvement over a state-of-the-art scheme without SU retransmissions, and up to $2\times$ gain over a scheme using a non-adaptive access policy instead of the optimal one.
\vone{-4mm}
\end{abstract}

\vdo{-5mm}
\section{Introduction}
The recent proliferation of mobile devices has been exponential in number as well as heterogeneity, leading to spectrum crunch. The tremendous increase in demand of wireless services requires a shift in network design from \emph{exclusive spectrum reservation} to
\emph{spectrum sharing} to improve spectrum utilization \cite{pcast}. Cognitive radios  \cite{DySpAN} enable the coexistence of incumbent legacy users (\emph{primary users}, PUs) and opportunistic users (\emph{secondary users}, SUs) capable of autonomous reconfiguration by learning and adapting to the communication environment \cite{MicheTCNC}. 

A central question is: how can opportunistic users  leverage side information about 
 nearby legacy users (\emph{e.g.}, activity, channel conditions, protocols employed, packets exchanged \cite{Geirhofer}) to opportunistically access the spectrum and improve their own performance, with minimal or no degradation to existing legacy users~\cite{Peha}?
In this paper, we address this question in the context of the retransmission protocol
employed by PUs. We consider a wireless network composed of a pair of PUs and a pair of SUs. The PU employs Type{-}I HARQ~\cite{Comroe} to improve reliability, which
 results in replicas of the  PU packet (re)transmitted over subsequent slots, henceforth referred to as \emph{ARQ window}. With the scheme developed in \cite{IT_ARQ}, the SU receiver attempts to decode the PU packet independently in each slot, and replicas of the PU packet are not exploited; thus, in the example of Fig.~\ref{figexlabel2}, no SU packets can be decoded with the scheme \cite{IT_ARQ}. However, the SU may leverage these replicas to improve its own throughput via \emph{interference cancellation}. In \cite{MichelusiJSAC}, we have investigated a scheme, termed \emph{backward interference cancellation} (BIC),  where the SU receiver decodes the PU packet and removes its interference to achieve interference-free transmissions over the entire ARQ window of that PU packet. In the example of Fig.~\ref{figexlabel2}, this scheme allows the SU receiver to decode packet S4, after removing the interference of P2, decoded in slot 5, thus outperforming \cite{IT_ARQ}.
 
  In \cite{MichelusiCD}, we have advanced this concept by allowing the SU to opportunistically retransmit SU packets and buffer the corrupted signals at the SU receiver.
In fact, if a previously transmitted and failed SU packet is decoded at the SU receiver,
its interference can be removed from previous retransmission attempts of the same,
thus facilitating the decoding of the concurrent PU packets; in turn, the interference of these PU packets can be removed to facilitate the decoding of SU packets over their respective ARQ windows. This scheme continues in chain, until no more packets can be decoded, hence the name \emph{chain decoding} (CD) \cite{MichelusiCD}. In the example of Fig.~\ref{figexlabel2}, the retransmission of S3 in slot 4 allows the SU receiver to connect in \emph{chain} the ARQ windows of P1 and P2, so that all 3 SU packets S1-S3 can be decoded, versus only one decoded with BIC, and none decoded with the scheme in  \cite{IT_ARQ}. However, note that the decoding of S1 is delayed by 4 slots. Therefore, the throughput improvement of chain decoding comes at a latency cost in the delivery of SU packets, hence it is suitable for latency-tolerant applications, such as monitoring sensor networks as in \cite{Nayak} and video streaming, see \cite[Fig. 1]{GSMA} for a list of potential use cases. Additionally, as explained in Sec.~\ref{buffering}, chain decoding requires a buffering mechanism at the SU receiver, whereas no buffering is required in \cite{IT_ARQ}. The impact of these factors are evaluated numerically in Sec.~\ref{sec:numres}.
  
  \begin{figure}
    \centering
\includegraphics[width=\w{.7}{.44}\linewidth]{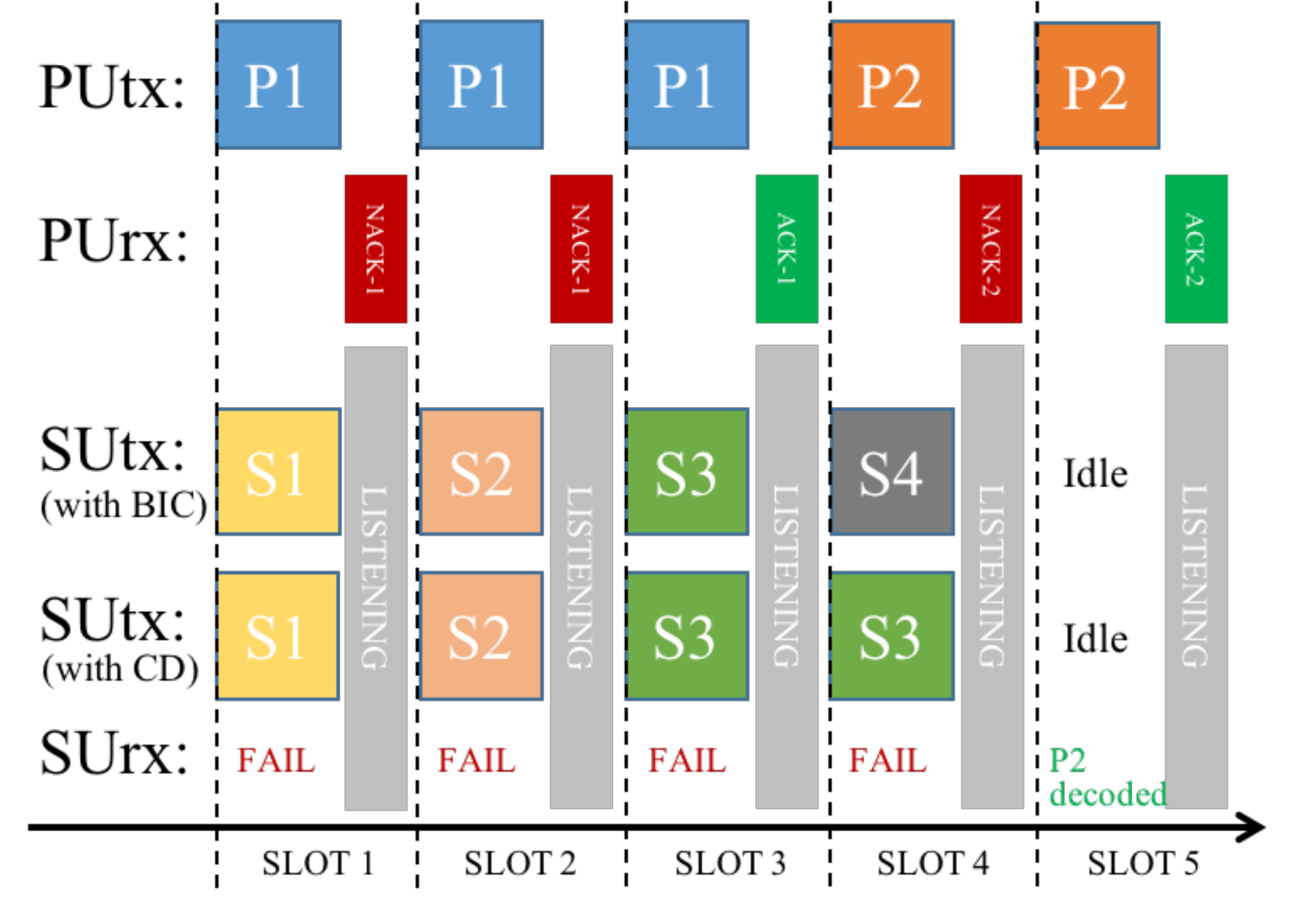}
\caption{Example of chain decoding and comparison with BIC \cite{MichelusiJSAC} and the scheme in \cite{IT_ARQ}. The SU fails in slots 1-4. SUrx decodes P2 in slot 5. With the scheme in \cite{IT_ARQ}, SUrx does not remove the interference of P2 in slot 4, hence no SU packets are decoded. With BIC, SUrx removes the interference of P2 from slot 4 to decode S4, thus decoding $1$ SU packet. With chain decoding, SUrx retransmits S3 in slot 4; after decoding P2 in slot 5, \emph{chain decoding} is initiated: SUrx removes the interference of P2 from slot 4 to decode S3; hence, it removes the interference of S3 from slot 3 to decode P1; finally, it removes the interference of P1 from slots 1-2 to decode S1-S2; overall, SUrx decodes $3$ SU packets. Chain decoding requires a buffering scheme at SUrx, and to monitor the ACK/NACK feedback $y_{P,t}$ from the PU, along with the feedback from SUrx $y_{S,t}$ to decide when and which packet to transmit.}
\label{figexlabel2}
\vone{-10mm}
\vdo{-5mm}
\end{figure}

While our previous work \cite{MichelusiCD} proves the optimality of a CD protocol, which dictates the retransmission process at the SU to maximize the potential of interference cancellation at its receiver, it does not investigate the design of an \emph{optimal SU access scheme} (\emph{i.e.}, whether the SU should transmit or remain idle). Such design question, not addressed in \cite{MichelusiCD} but investigated in this paper, is of great practical interest. In fact, as we will show in Sec.~\ref{learning}, information on the structure of the optimal SU access scheme may be exploited to significantly reduce the policy search space and the optimization complexity, thus facilitating learning and adaptation in scenarios where the statistics are unknown or vary over time.
\vdo{-5mm}
\vone{-5mm}
\subsection{Main Contributions}
Based on the underlay paradigm \cite{goldsmith}, in this paper we investigate the optimal SU access policy to maximize the SU throughput via CD, under an interference constraint to the PU. The contributions of this paper are as follows:
\begin{enumerate}[leftmargin=0.45cm]
\item  We derive the optimal policy and its performance in closed form for the case where the PU enforces reliability, and show that the optimal SU access policy reflects a randomization among three modes of operation: 1) The SU remains idle over the  entire ARQ window;
  2) The SU transmits only after its receiver decodes the PU packet; 3) The SU always transmits over the ARQ window. With mode 1), the SU does not interfere with the PU; with mode 2), it leverages knowledge of the PU packet to perform interference cancellation and create an interference-free channel for its own data transmission; with mode 3), it leverages the full potential of successive interference cancellation via CD over the ARQ window. The optimal randomization among these three modes reflects a strike between maximizing the SU throughput and minimizing the interference to the PU.
  \item
 We show numerically that, for a 10\% interference constraint, the optimal access policy under CD attains a throughput gain of 15\% with respect to BIC, and up to $2\times$ improvement over a CD scheme using a non-adaptive access policy. We demonstrate robustness of CD under a finite buffer size at the SU receiver, and under a finite ARQ deadline at the PU.
\item  Importantly, the optimal  policy does not require knowledge of the statistics of the model, but only an estimate of  the interference level perceived at the PU receiver (for instance, estimated by monitoring the ACK/NACK feedback signal \cite{Eswaran}). This feature facilitates learning and adaptation when the statistics of the system are unknown or vary over time. For these scenarios, we present a stochastic optimization framework, where the SU learns the optimal randomization and its transmit rate based solely on ARQ feedback from the PU and CSI measurements at the SU receiver. We prove the effectiveness of this strategy numerically.
\end{enumerate}
\vdo{-3mm}
\vone{-5mm}
\subsection{Related work}  
Other previous work leverage the retransmission protocol of the PU \cite{Nosratinia,SHARP,IT_ARQ,Jovicic,Kulkarni,Eswaran}. In \cite{Nosratinia,SHARP}, the primary ARQ process is limited to one retransmission, with incremental redundancy and packet combining, respectively, assuming a slow-fading scenario. In  \cite{IT_ARQ}, it is shown that the SU throughput is maximized by concentrating the interference to the PU in the first transmissions of the PU packet, but the temporal redundancy of ARQ is not exploited to cancel interference. In \cite{Kulkarni}, the SUs cooperate with the PU by assisting retransmissions of failed packets using distributed orthogonal space-time block code; however, knowledge of the PU packet is not exploited at the SU receiver to perform interference cancellation. Differently from \cite{Kulkarni}, we assume no cooperation with the PU at the SU transmitter, but only interference cancellation at the SU receiver. Differently from these works, in this paper we consider multiple retransmissions (in contrast to \cite{Nosratinia,SHARP}), and we exploit the redundancy of the ARQ process (in contrast to \cite{IT_ARQ,Kulkarni}). 

Similarly to \cite{Zhang}, we assume that the ARQ feedback is overheard by the SU without errors; practical aspects related to imperfect sensing can be investigated using tools developed in \cite{Kulkarni,Cabric}. While non-causal knowledge of the PU packet is assumed in \cite{Jovicic}, in our work we model the dynamic acquisition of the PU packet at the SU receiver, which is of more practical interest. In \cite{Eswaran}, the SU exploits
ARQ feedback to estimate the throughput loss of the PU and tune its transmission policy, based on information theoretic results. In \cite{Zhang}, the PU adapts its transmit power in response to interference; the SU uses the feedback from the PU to control its interference. 
In this paper, instead, we leverage the structure of the optimal policy to design a simple but effective learning algorithm based on \emph{stochastic gradient descent} \cite[Chapter 14]{Shalev-Shwartz}, as opposed to approaches based on reinforcement learning \cite{Sutton98a}, which suffer from slow convergence rate due to the need to explore the action and state spaces.

This paper is organized as follows. In Sec.~\ref{sec:sys_model}, we describe the system model; in Sec.~\ref{perfopt}, we introduce the performance metrics and optimization problem.
In Sec.~\ref{analysis}, we provide the analytical results. In Sec.~\ref{learning}, we present the stochastic optimization framework, and in Sec.~\ref{sec:numres} we present numerical results. In Sec.~\ref{sec:conclu}, we provide concluding remarks.
\begin{figure}
    \centering
    \includegraphics[width=\w{1}{.7}\linewidth,trim = 0mm 0mm 30mm 0mm,clip=false]{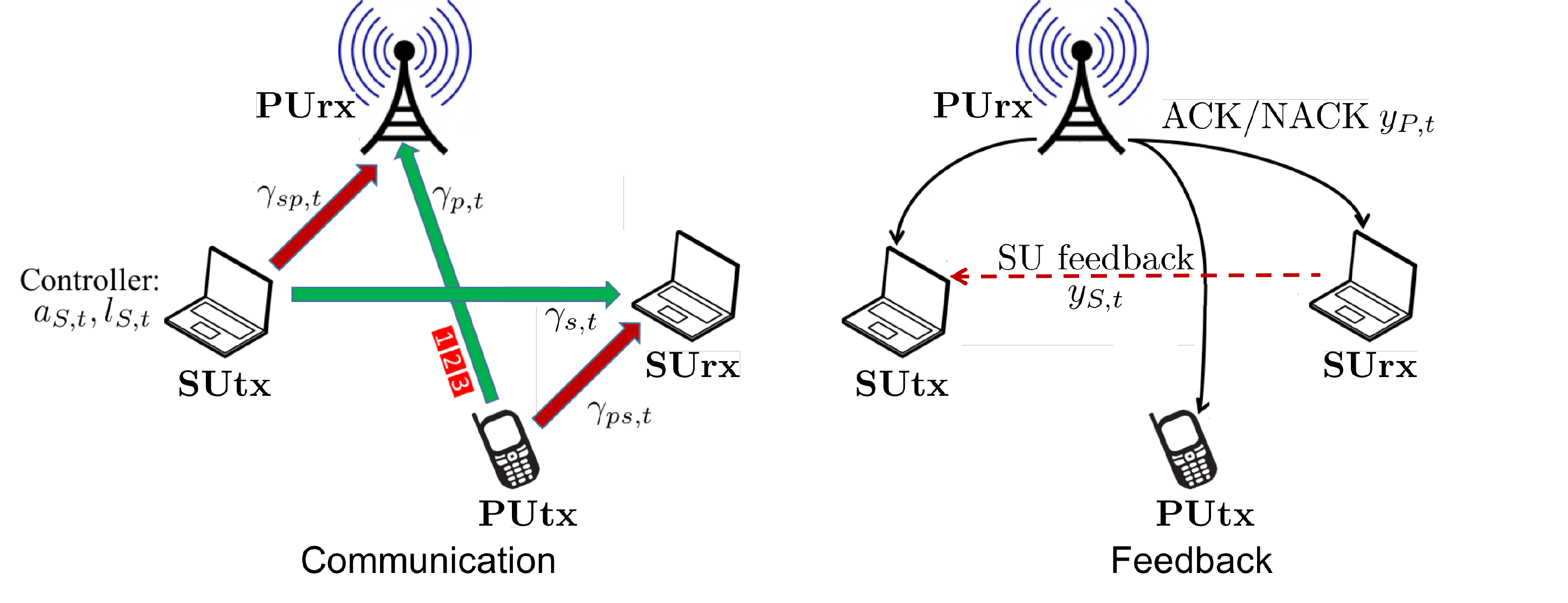}
\caption{System model: the first portion of the slot is devoted to data communication, whereas the last portion is used for feedback signaling.}
\label{figexlabel1}
\vone{-8mm}
\vdo{-5mm}
\end{figure}

\vdo{-3mm}
\vone{-5mm}
\section{System Model}\label{sec:sys_model}
We consider a two-user interference network,  depicted in Fig.~\ref{figexlabel1}, where a primary (PUtx)  and a secondary (SUtx) users transmit to their respective receivers, PUrx and SUrx, and generate mutual interference. The PU uses retransmissions (ARQ) to enforce reliability. The SU, on the other hand, uses chain decoding for its own transmissions. To this end, SUtx monitors the ACK/NACK feedback (signal $y_{P,t}$) from the PU and the state of the buffer at SUrx (via the feedback signal $y_{S,t}$), to decide whether to perform a transmission attempt or remain idle, and which packet to transmit, see example in Fig.~\ref{figexlabel2}. In the following, we provide details of these operations. The main parameters of the model are given in Table I.

Time is divided into slots of fixed duration $\Delta$, corresponding to the transmission of one data packet and the feedback signal from the receiver, see Fig.~\ref{figexlabel2}. We assume a block-fading channel model, \emph{i.e.}, the channel gains are constant within each slot, i.i.d. over time and independent across links. SUtx and PUtx transmit with constant powers $P_s$ and $P_p$, respectively. $P_s$ may be based on an interference temperature threshold experienced at the PU receiver \cite{Zhang}, and can be estimated using
techniques developed in \cite{Nayak}. Assuming AWGN noise at the receivers, we define the SNR of the links SUtx$\rightarrow$SUrx, PUtx$\rightarrow$PUrx, SUtx$\rightarrow$PUrx and PUtx$\rightarrow$SUrx at time $t$ as $\gamma_{s,t},\gamma_{p,t},\gamma_{sp,t},\gamma_{ps,t}$, i.i.d. over time and with mean $\bar\gamma_s,\bar\gamma_p,\bar\gamma_{sp},\bar\gamma_{ps}$, respectively.

No channel state information is available at the transmitters. Thus, PUtx transmits with fixed rate $R_p$ [bits/s/Hz], and is data backlogged. SUtx transmits with fixed rate $R_{s}$ [bits/s/Hz], or remains idle to avoid causing interference to the PU. We denote the SU access decision in slot $t$ as $a_{S,t}\in\{0,1\}$, selected according to access policy $\mu$, introduced in Sec.~\ref{sec:policies}. Thus, $a_{S,t}{=}1$ if SUtx transmits, and
$a_{S,t}{=}0$ if it remains idle.

We assume that the SU knows the signal characteristics of the PU and is accurately synchronized with the PU system, as commonly assumed in the literature \cite{IT_ARQ,Kulkarni,Geirhofer,SHARP,Nosratinia}. The modulation type can be inferred using signal processing techniques such as cyclostationary feature detection \cite{Cabric} or
deep neural networks \cite{ElGamal}. The codebook information may be obtained if the PUs follow a uniform standard for communication based on a publicized codebook, or periodically broadcast it \cite{goldsmith}. Moreover, the SUs perform timing, carrier synchronization and channel equalization by leveraging pilots, preambles, synchronization words or spreading
codes used by PUs for coherent detection \cite{Cabric}. The SU pair can use this information to synchronize with the PU system to detect ARQ feedback messages, decode the PU packet, and then reconstruct the PU transmit signal to perform interference cancellation, 
\emph{e.g.}, using techniques developed in~\cite{Wang}.

 \begin{table}[t]
 \centering
 \footnotesize
\begin{tabular}{ |r|l| } 
 \hline
 \emph{Symbol} & \emph{Meaning} \\ 
  \hline
 $R_s,R_p$ & Transmission rate of SU and PU, bits/s/Hz
 \\
 $P_s,P_p$ & Transmission power of SU and PU
 \\ 
 $l_S,l_P$ & SU and PU packet labels
 \\ 
 $\gamma_{s,t},\gamma_{p,t}$  & 
 SNR of SUtx$\rightarrow$SUrx \& PUtx$\rightarrow$PUrx links \vspace{-0.8mm}\\&  at time $t$,with mean $\bar\gamma_s,\bar\gamma_p$
  \\ 
   $\gamma_{sp,t},\gamma_{ps,t}$ & SNR of SUtx$\rightarrow$PUrx \& PUtx$\rightarrow$SUrx links \vspace{-0.8mm}\\&  at time $t$,  with mean $\bar\gamma_{sp},\bar\gamma_{ps}$
  \\ 
   $a_{S,t}\in\{0,1\}$  & Access decision of SU at time $t$
      \\
     $\mu$ & SU access policy, used to select $a_{S,t}\in\{0,1\}$
  \\ 
  $y_{P,t},y_{S,t}$ &  PUrx and SUrx feedback (see Sec.~\ref{SUFB}), 
  \vspace{-0.8mm}\\&
  $y_{P,t}{\in}\{\text{ACK},\text{NACK}\}$, $y_{S,t}{\in}\{1,\dots,7\}$
  \\
     $\rho_{x},x\in[0,1]$  & Failure probability of PU when SU
       \vspace{-0.8mm}\\& transmits with probability $x$
   \\
   $\delta_s,\delta_p,\delta_{sp},$
   & Decoding probabilities at SUrx,
   \vspace{-0.8mm}\\
   $\upsilon_s,\upsilon_p,\upsilon_{sp},\upsilon_{\emptyset}$&see (\ref{outcomes}) and Fig.~\ref{decevents}\\
   $D_s,D_p$ & 
   Interference-free decoding probability of SU/PU  \vspace{-0.8mm}\\& packets at SUrx, see (\ref{Ds})-(\ref{Dp})
     \\ 
   $C(\mathrm{SNR})$  &  $\triangleq\log_{2}(1+\mathrm{SNR})$, capacity of the Gaussian  \vspace{-0.8mm}\\& channel as a function of SNR
   \\
         $\bar T_S(\mu),\bar T_P(\mu)$ & Average long-term SU and PU throughputs
   \\
   $\bar T_S^{(GA)}(\epsilon)$ & Genie-aided SU throughput, when SUtx  \vspace{-0.8mm}\\& transmits with probability $\epsilon$
   \\
           $\nabla(\mu)$ & Average PU throughput degradation, $\leq\nabla_{\max}$
   \\
           $\nabla_{th}{\in}(0,1)$ & Max throughput degradation tolerated by the PU
   \\
           $
           \{b|b{\geq}0\}{\cup}\{\stackrel{\leftrightarrow}{\K},\stackrel{\rightarrow}{\K}\}
           $ & States of the CD protocol, see Fig.~\ref{CDgraph} and Sec.~\ref{analysis}
   \\
 \hline
\end{tabular}
\label{tablesymbols}
\caption{Main Parameters}
\vone{-10mm}
\vdo{-6mm}
\end{table}

Let $\rho_{a_{S,t}}$ be the failure probability for the PU as a function of $a_{S,t}{\in}\{0,1\}$.
Clearly, $\rho_0{<}\rho_1$, since transmissions of the PU are more likely to fail
under interference from the SU. The PU employs ARQ in case of transmission failure~\cite{Comroe} and enforces perfect reliability, so that a packet is retransmitted until, eventually, it is received successfully. We will evaluate the effect of a finite ARQ deadline numerically in Sec.~\ref{sec:numres}. At the end of slot $t$, PUrx sends a feedback message $y_{P,t}{\in}\{\text{ACK},\text{NACK}\}$ to PUtx over a dedicated control channel, to notify it about the transmission outcome and, possibly, request a retransmission (NACK). This is received with no error by PUtx and overheard by the SU pair.

We assume all codewords are drawn from a Gaussian codebook, and are sufficiently long to allow reliable decoding whenever the attempted rate is within the mutual information rate of the channel. Let $C(\mathrm{SNR}){\triangleq}\log_{2}(1{+}\mathrm{SNR})$ be the capacity of the Gaussian channel as a function of the SNR at the receiver \cite{Cover}. We denote the packets being transmitted by the SU and PU with their labels $l_S$ and  $l_P$, respectively ($l_S{=}0$ if the SU remains idle). We now describe the SU system.
 \vdo{-3mm}
 \vone{-5mm}
\subsection{Decoding outcomes at SUrx}
The decoding performance at SUrx depends on whether $l_P$ is known or not at SUrx to perform interference cancellation, as a result of a previous successful decoding operation, and on the access decision $a_{S,t}{\in}\{0,1\}$, as detailed below.

 \subsubsection{Case $a_{S,t}{=}1$, $l_P$ unknown}
\label{subsecdecevent}
SUrx attempts to decode both $l_S$ and $l_P$. Since SUtx, PUtx and SUrx form a multiple access channel \cite{Cover}, the outcomes at the receiver for a given rate pair $(R_s,R_p)$, as a function of the SNRs $(\gamma_s,\gamma_{ps})$, are as depicted in  Fig.~\ref{decevents}. We denote their probabilities as
\begin{align}
\label{outcomes}
\hspace{-2mm}
\begin{array}{ll}
\text{\{1\}:}\ \delta_{sp}\triangleq\mathbb P(\text{$l_P$ \& $l_S$ decoded}), 
&\hspace{-3.5mm}\text{\{4\}:}\ \upsilon_{s}\triangleq\mathbb P(\text{$l_P\rightarrow l_S$}),
\\
\text{\{2\}:}\ \delta_{s}\triangleq\mathbb P(\text{only $l_S$ decoded}),
&\hspace{-3.5mm}\text{\{5\}:}\ \upsilon_{p}\triangleq\mathbb P(\text{$l_S\rightarrow l_P$}),
\\
\text{\{3\}:}\ \delta_{p}\triangleq\mathbb P(\text{only $l_P$ decoded}),
&\hspace{-3.5mm}\text{\{6\}:}\ \upsilon_{sp}\triangleq\mathbb P(\text{$l_S\leftrightarrow l_P$}),
\\
\text{\{7\}:}\ \upsilon_{\emptyset}\triangleq\mathbb P(\text{failure}),
\end{array}
\hspace{-3mm}
\end{align}
computed as the marginals with respect to the distribution of $(\gamma_s,\gamma_{ps})$.
In \{1\}, $l_S$ and $l_P$ are jointly decoded.
In \{2\} (respectively, \{3\}), only $l_S$ ($l_P$) is decoded, by treating the interfering $l_P$ ($l_S$) as noise. In \{4\}, \{5\} or \{6\}, neither $l_S$ nor  $l_P$ can be currently decoded by SUrx; however, one packet can be decoded only after removing the interference from the other.
 The arrow $l_X{\rightarrow}l_Y$ indicates the decoding dependence between $l_X$ and $l_Y$, so that $l_Y$ can be decoded after removing the interference from $l_X$, but not vice-versa (unless $l_X\leftrightarrow l_Y$). In these three cases, the received signal is buffered at SUrx for future recovery via chain decoding, see Sec.~\ref{buffering}. Finally, in \{7\}, the channel quality is poor, so that neither  $l_S$ nor  $l_P$ can be decoded by SUrx (even after removing their mutual interference) and the signal is discarded.
 
 \begin{figure}[t]
\centering  
\includegraphics[width=\w{.6}{.5}\linewidth,trim = 0mm 0mm 0mm 0mm,clip=false]{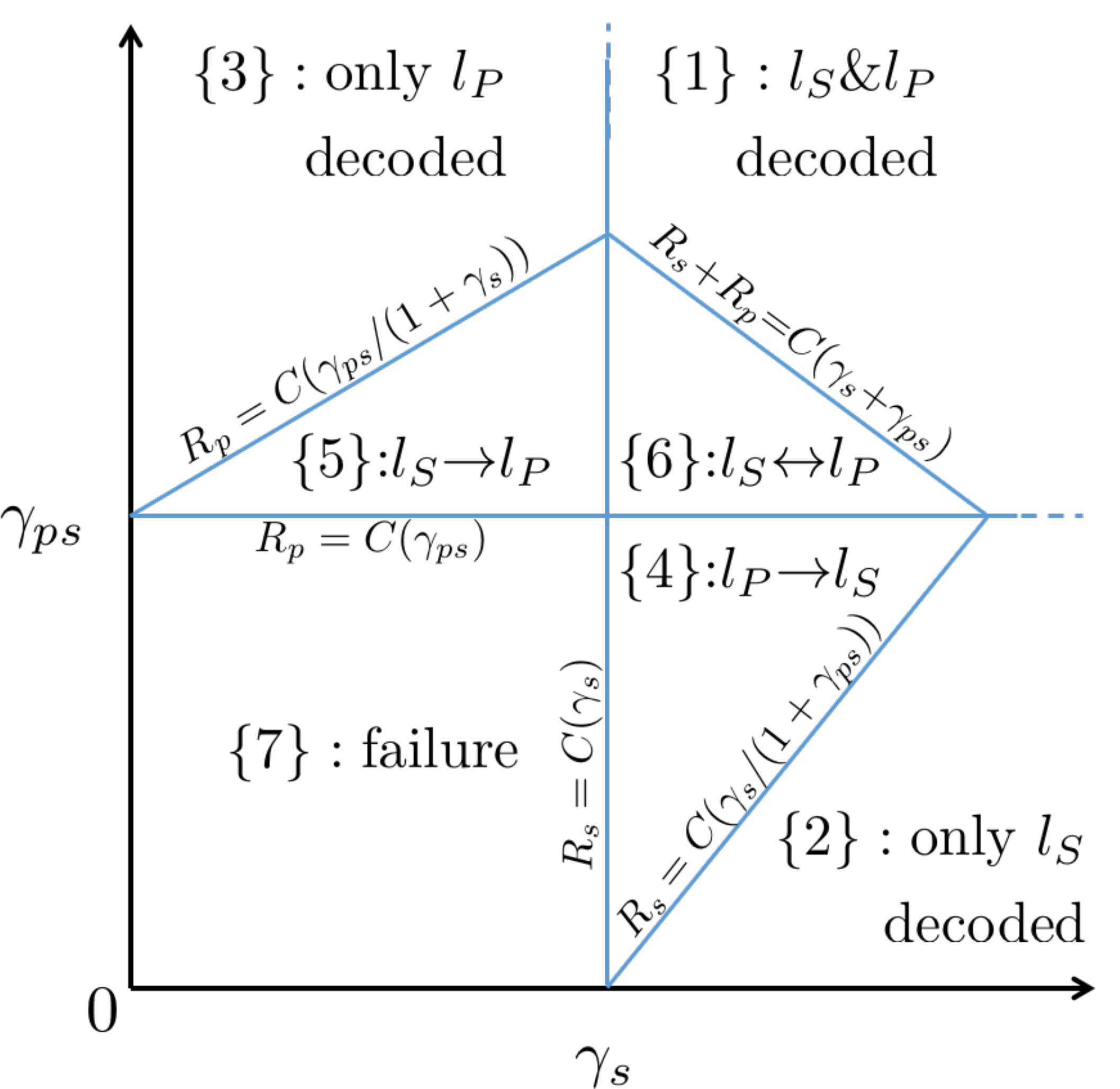}
\caption{Decoding regions at SUrx as a function of $(\gamma_s,\gamma_{ps})$. The boundaries correspond to the decoding thresholds of the multiple access channel \cite{Cover}.}
\label{decevents}
\vdo{-5mm}
\vone{-10mm}
\end{figure}

\subsubsection{Case $a_{S,t}{=}1$, $l_P$ known}
$l_P$ is known at SUrx as a result of a previous decoding operation at SUrx, its interference is removed from the received signal, thus creating an interference-free channel to decode $l_S$. Therefore, the SU transmission succeeds if $R_s< C(\gamma_{s,t})$. Since this event is the union of the four disjoint events $\{1\}$, $\{2\}$, $\{4\}$ and $\{6\}$ (right side of the decoding threshold $R_s=C(\gamma_{s})$ in Fig.~\ref{decevents}), its probability is obtained via (\ref{outcomes}) as
\begin{align}
\label{Ds}
D_s\triangleq\mathbb P(R_s< C(\gamma_{s,t}))=\delta_{s}+\delta_{sp}+\upsilon_{s}+\upsilon_{sp}.
\end{align}

\subsubsection{Case $a_{S,t}{=}0$, $l_P$ unknown}
SUtx remains idle and SUrx attempts to decode $l_P$. Thus, SUrx decodes $l_P$ successfully if $R_p< C(\gamma_{ps,t})$. Since this event is the union of the four disjoint events $\{1\}$, $\{3\}$, $\{5\}$ and $\{6\}$ (region above the decoding threshold $R_p=C(\gamma_{ps})$ in Fig.~\ref{decevents}), its probability is obtained via (\ref{outcomes}) as
\begin{align}
\label{Dp}
D_p\triangleq\mathbb P(R_p< C(\gamma_{ps,t}))=\delta_{p}+\delta_{sp}+\upsilon_{p}+\upsilon_{sp}.
\end{align}

\subsubsection{Case $a_{S,t}{=}0$, $l_P$ known}
no decoding activity at SUrx.
\vdo{-3mm}
\vone{-5mm}
\subsection{Decoding feedback from SUrx}
\label{SUFB}
At the end of each slot, SUrx feeds back $y_{S,t}\in\{1,\dots,7\}$ to SUtx over a dedicated error-free control channel, indicating one of the regions of Fig.~\ref{decevents} (the numbering is given in (\ref{outcomes}) as $\text{\{j\}}$). This feedback signal, together with the ARQ feedback signal received from PUrx, allows SUtx to keep track of the chain decoding state, the buffering of corrupted signals and the knowledge of the current PU packet $l_P$ at SUrx.
\vdo{-3mm} 
\vone{-5mm}
\subsection{SU retransmissions, buffering and chain decoding}
\label{buffering}
The SU performs retransmissions and buffering at SUrx to improve the potential of interference cancellation at SUrx. For instance, if  $l_S{\rightarrow}l_P$ or  $l_P{\leftrightarrow l_S}$ in a previous slot, the SU may retransmit $l_S$. If $l_S$ is decoded by SUrx, its interference can be removed from the previously buffered received signal to recover $l_P$. In turn, the recovered  $l_P$ may be exploited to recover other SU packets from previously buffered signals received within the ARQ window associated to $l_P$, via interference cancellation, see example in Fig.~\ref{figexlabel2}. The iterative application of interference cancellation on signals buffered at SUrx is denoted as \emph{chain decoding} (CD). Thus, when  $l_P{\rightarrow}l_S$,  $l_S{\rightarrow}l_P$ or  $l_P{\leftrightarrow}l_S$, with probability $\upsilon_{s}$, $\upsilon_{p}$ and $\upsilon_{sp}$, respectively, SUrx buffers the corresponding received signals. For analytical tractability, we assume an infinite buffer at SUrx. We will evaluate the effect of a finite buffer size numerically in Sec.~\ref{sec:numres}.
 
\begin{definition}
\label{defCD}
 The decoding relationship among the SU and PU packets buffered at SUrx is represented by the \emph{CD graph}, with vertices the set of undecoded packets, and edges the decoding relationship among them. For instance, if $l_{S}\rightarrow l_{P}$,
then $l_{S}$ and $l_{P}$ are vertices in the CD graph, connected by a directed edge from $l_{S}$ to  $l_{P}$. We define the \emph{CD root} as the SU packet which, once decoded, triggers the recovery of the largest number of SU packets  via CD (see Fig.~\ref{CDgraph}).
\end{definition}

 \begin{figure*}
    \centering
\includegraphics[scale=0.25]{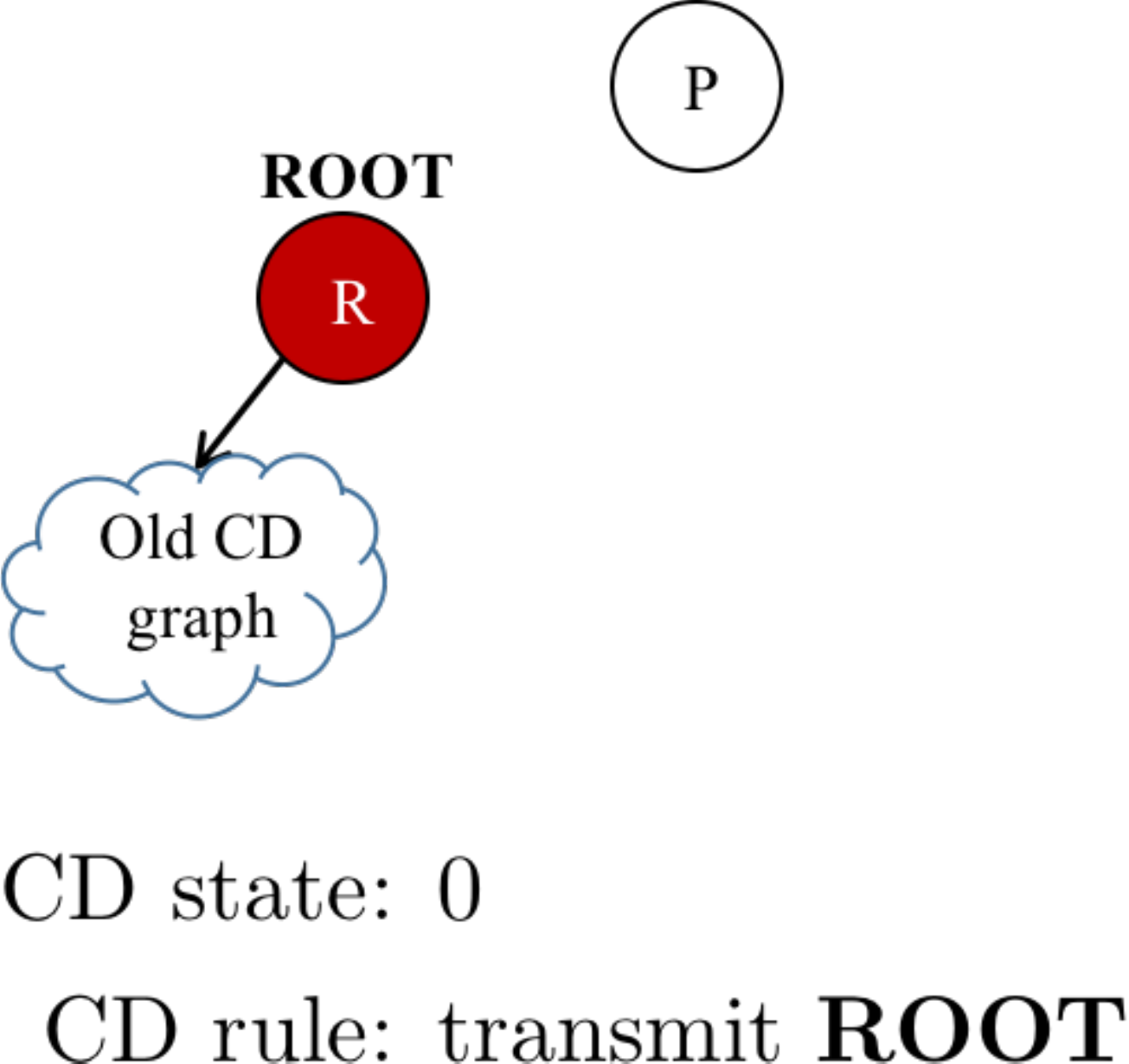}
\hspace{5mm}
\includegraphics[scale=0.25]{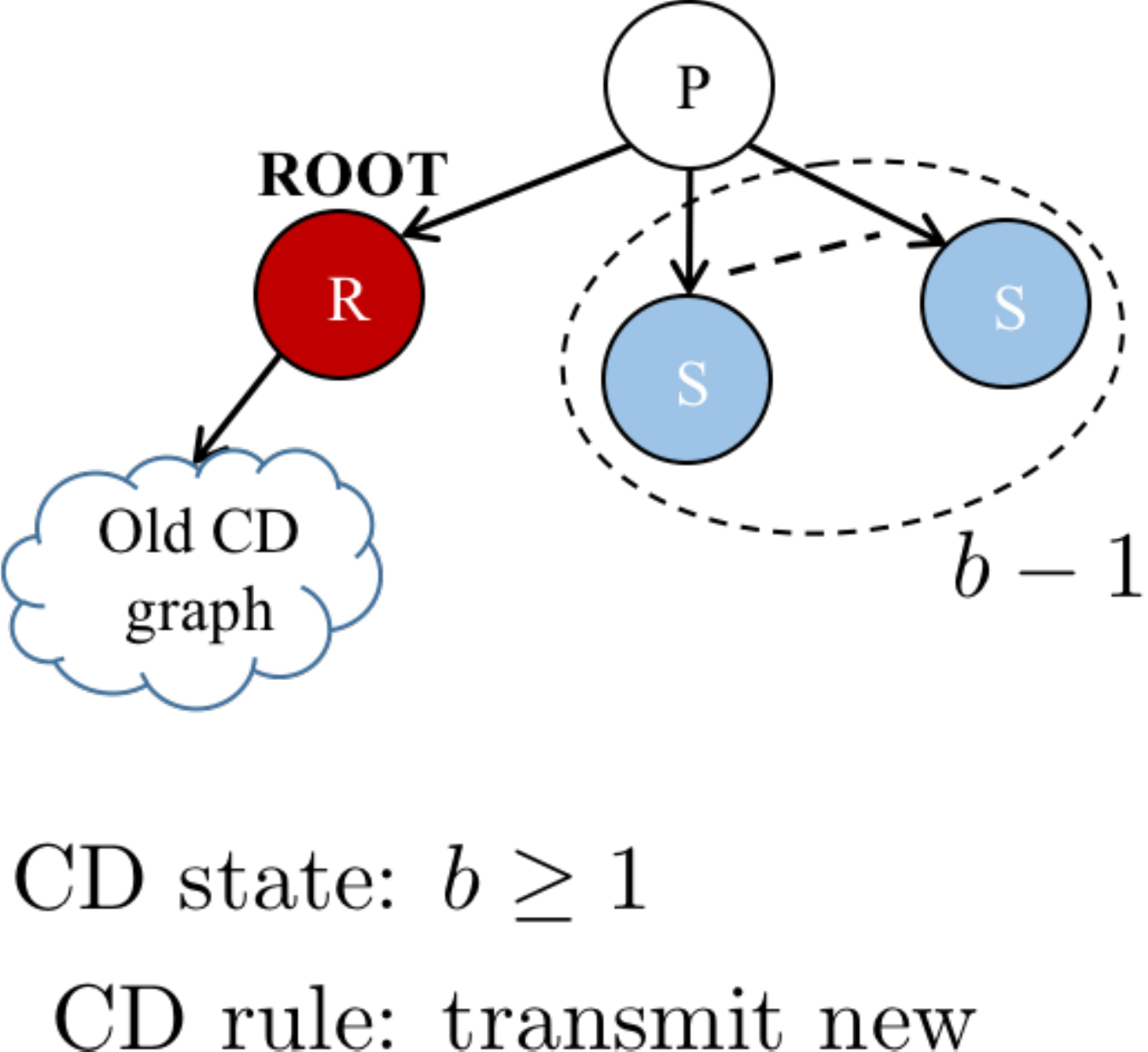}
\hspace{5mm}
\includegraphics[scale=0.25]{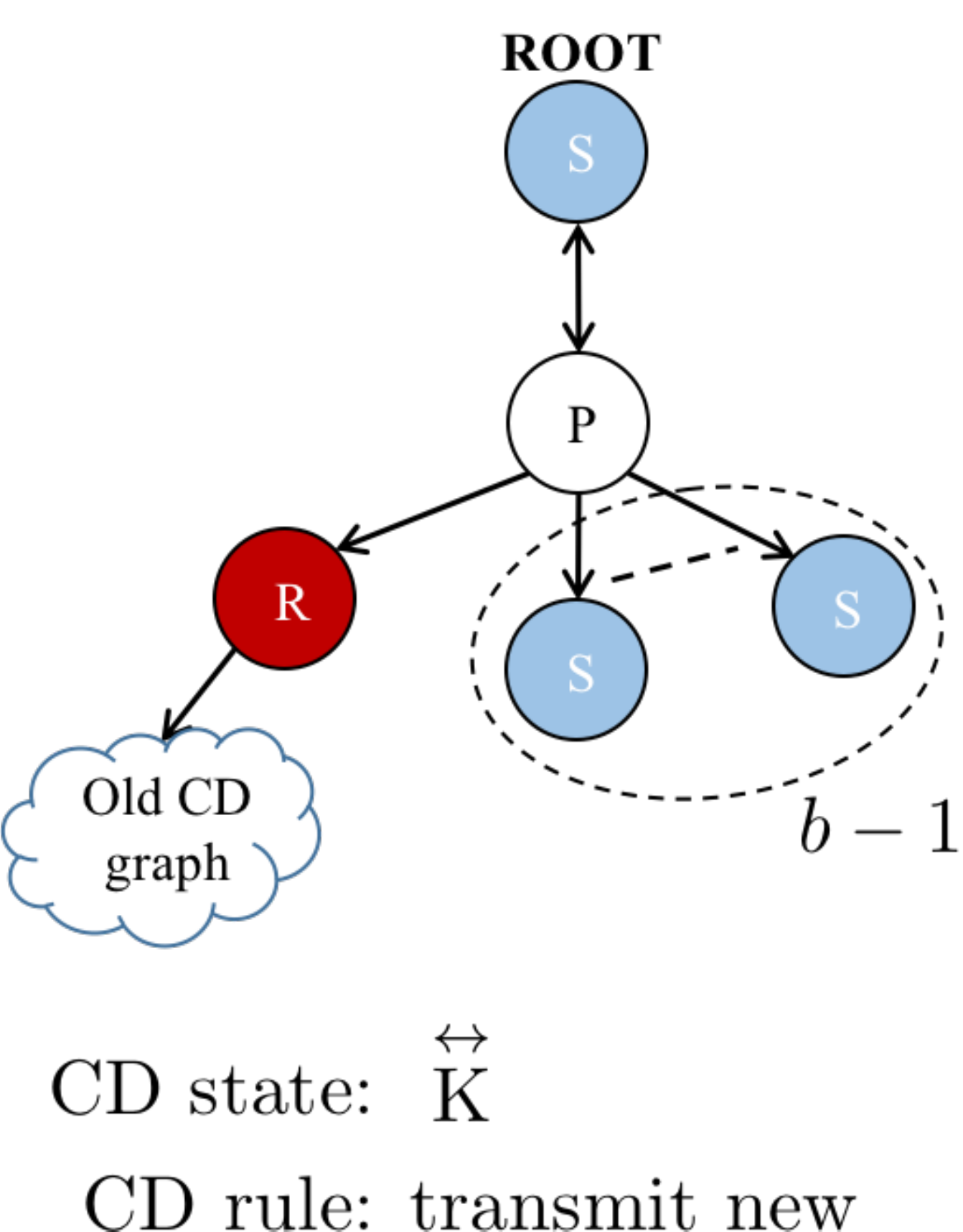}
\hspace{5mm}
\includegraphics[scale=0.25]{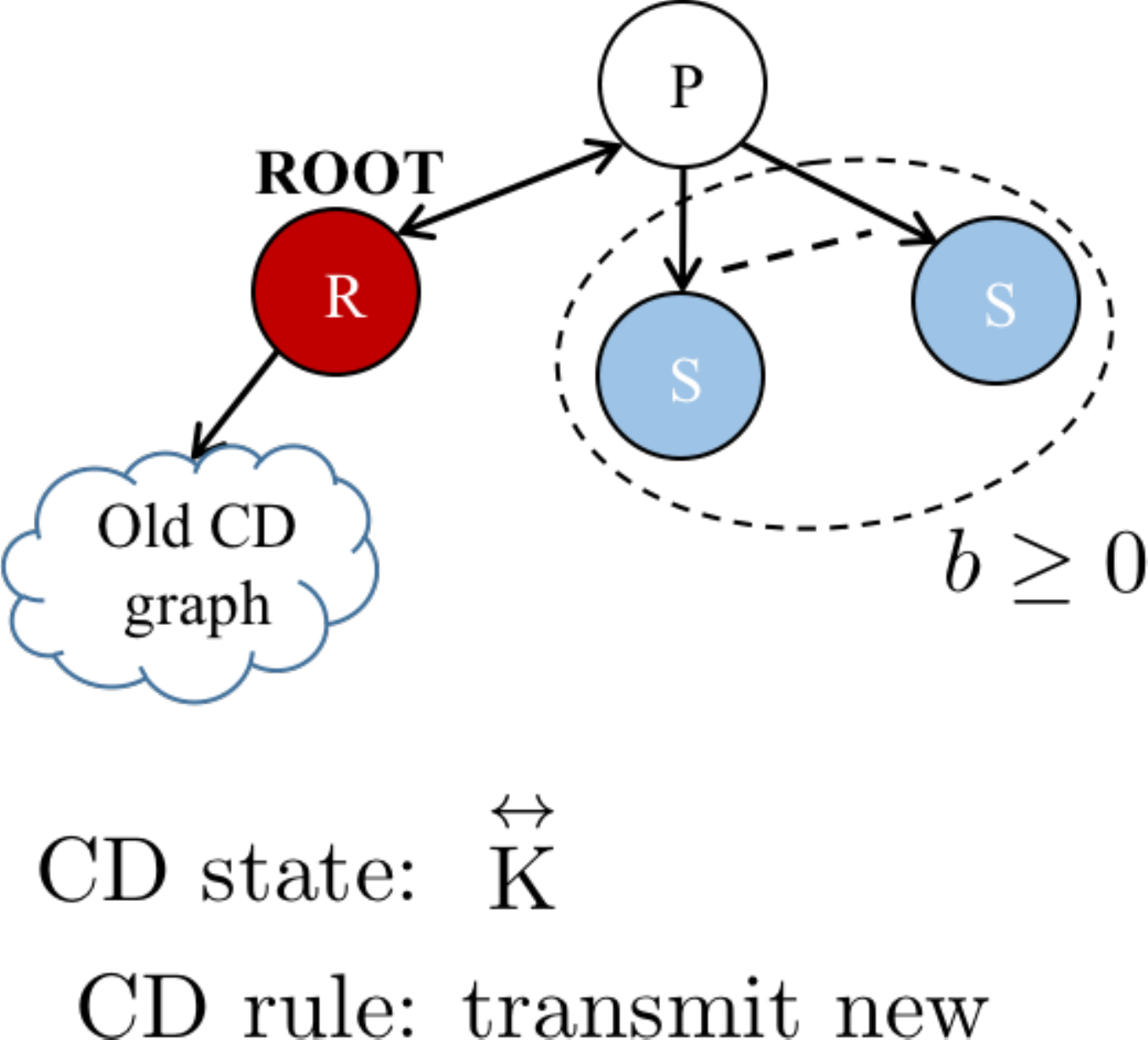}
\\
(Fig.~\ref{CDgraph}.a)
\hspace{25mm}
(Fig.~\ref{CDgraph}.b)
\hspace{25mm}
(Fig.~\ref{CDgraph}.c)
\hspace{25mm}
(Fig.~\ref{CDgraph}.d)
\\
\vspace{5mm}
\includegraphics[scale=0.25]{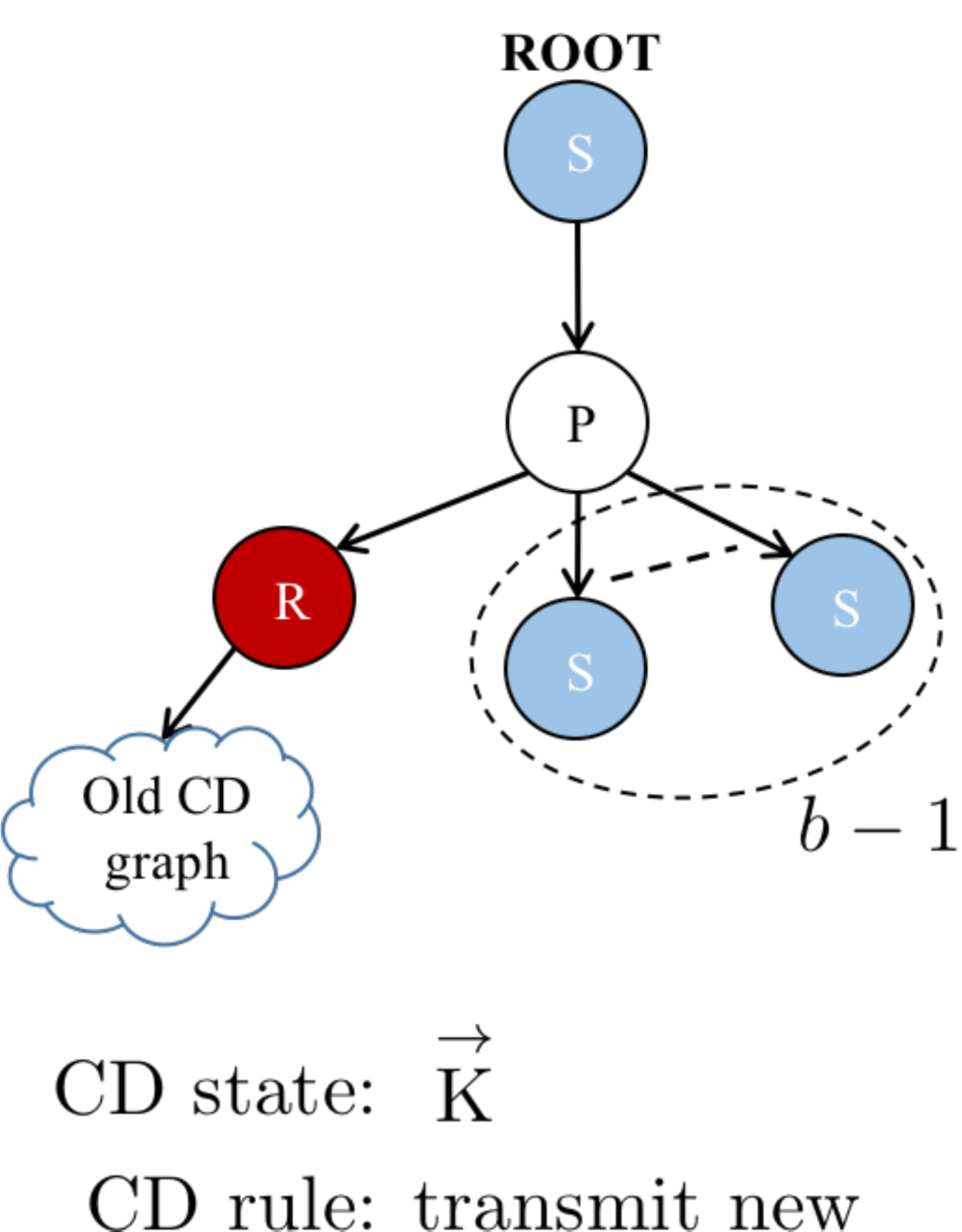}
\hspace{5mm}
\includegraphics[scale=0.25]{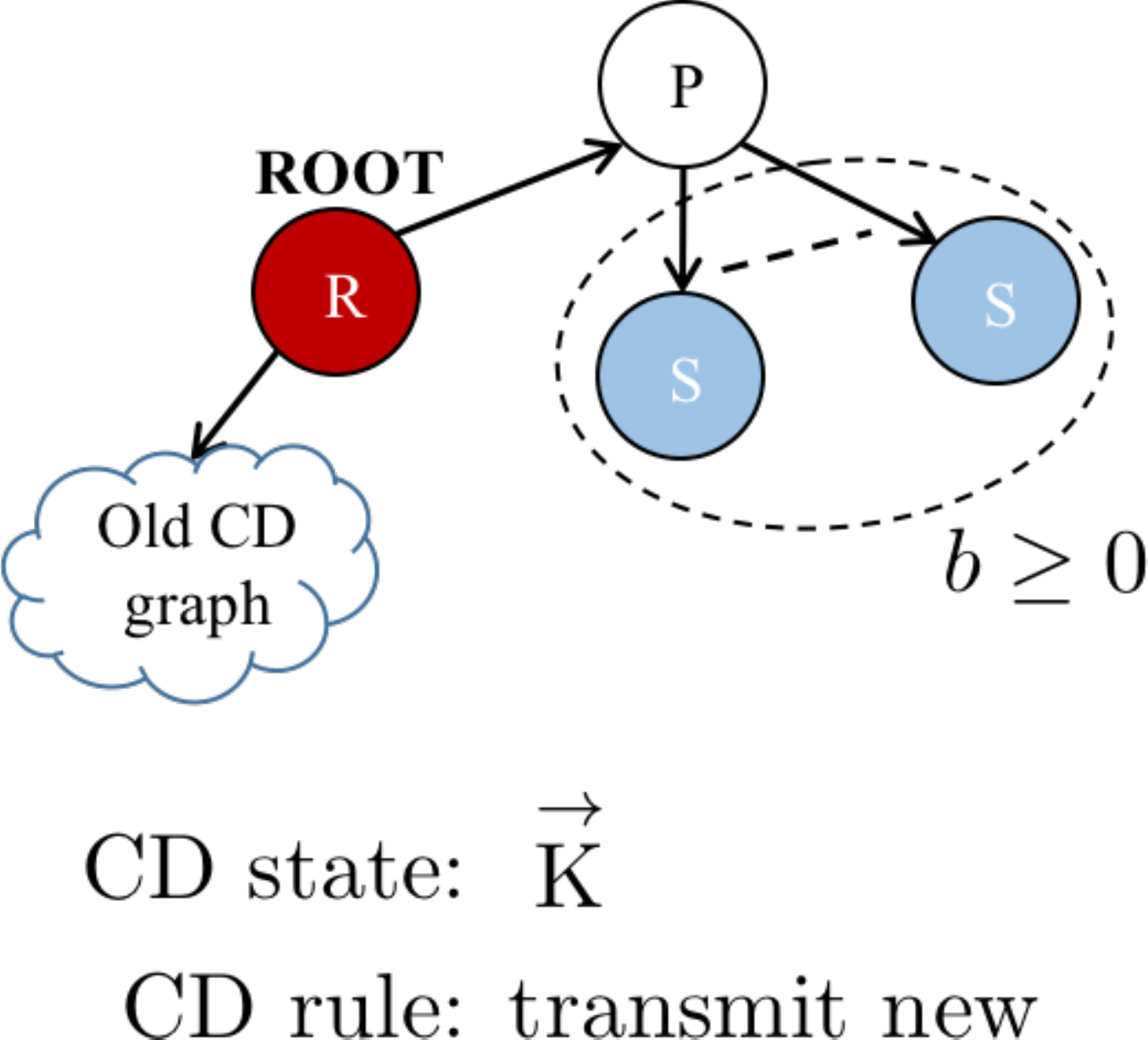}
\hspace{5mm}
\includegraphics[scale=0.25]{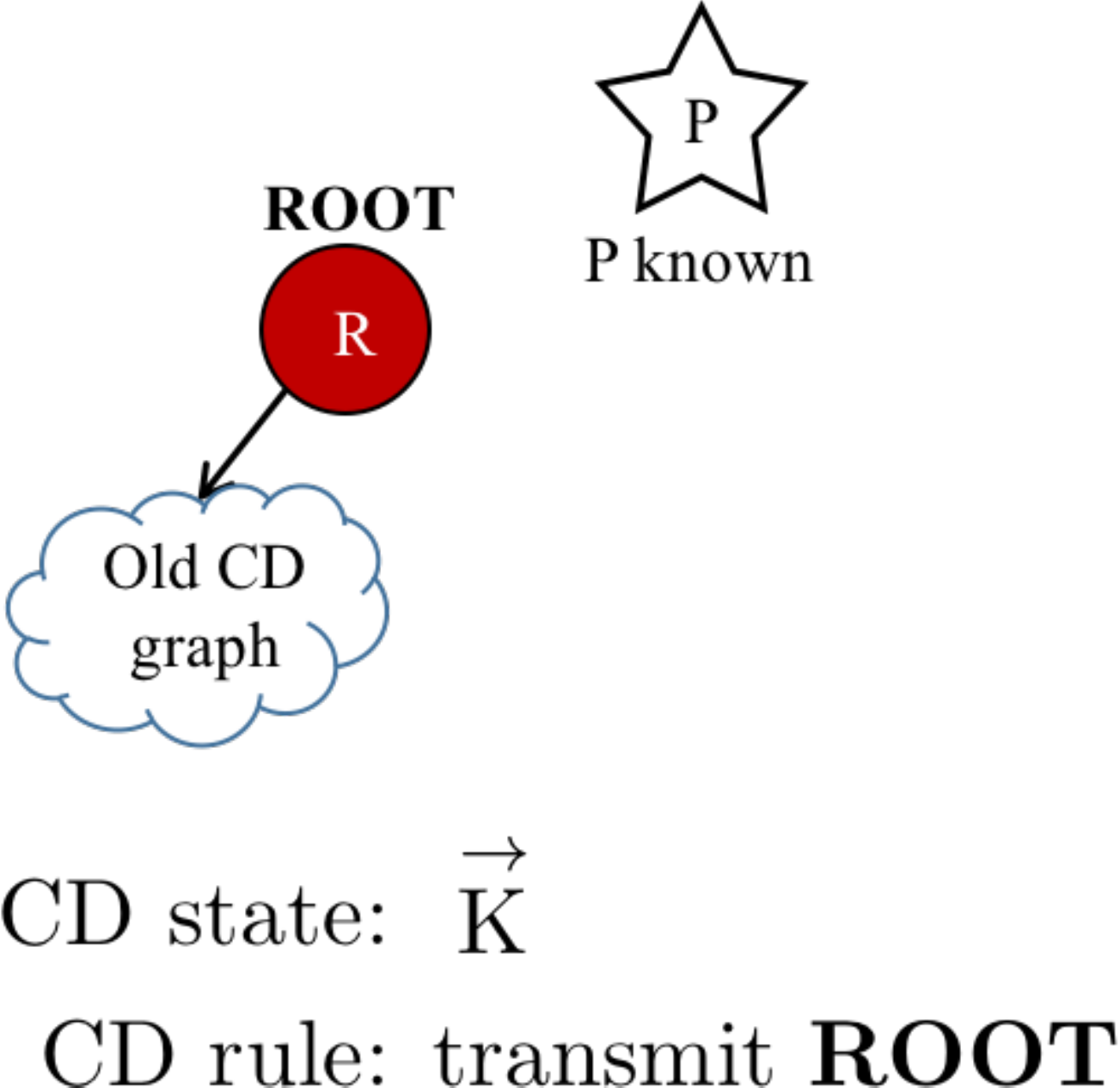}
\\
\vspace{2mm}
(Fig.~\ref{CDgraph}.e)
\hspace{25mm}
(Fig.~\ref{CDgraph}.f)
\hspace{25mm}
(Fig.~\ref{CDgraph}.g)
\caption{States of the CD graph. "P" denotes the packet currently transmitted by the PU;
"S" denotes SU packets; "\textbf{ROOT}" denotes the root of the CD graph (see Definition~\ref{defCD}); "Old CD graph" denotes the CD graph inherited from the previous ARQ window and still undecoded; similarly, "R" denotes its root.}
\label{CDgraph}
\vdo{-4mm}
\vone{-7mm}
\end{figure*}

The retransmission process at the SU is governed by the \emph{packet selection policy}:
  if $a_{S,t}{=}1$, it selects which SU packet to (re)transmit based on the structure of the CD graph at SUrx. In \cite{MichelusiCD}, we have shown that the optimal packet selection policy follows a chain decoding protocol, assumed in the rest of the paper. We refer to \cite{MichelusiCD} for details and proof of optimality. Herein, we describe the CD protocol with the help of Fig.~\ref{CDgraph}. At the start of a new ARQ window (the PU transmits a new packet), the PU packet is unknown at SUrx and the configuration is depicted in Fig.~\ref{CDgraph}.a. The CD graph evolves over the ARQ window, leading to one of the configurations in  Fig.~\ref{CDgraph}.a-g. In the configuration of Fig.~\ref{CDgraph}.a, the PU packet and the CD root are not connected: the CD protocol dictates to retransmit the CD root, so as to maximize the chances of either decoding it, or connecting it to the PU packet (when $l_P{\rightarrow}l_S$,  $l_S{\rightarrow}l_P$ or  $l_P{\leftrightarrow}l_S$), leading to one of the configurations in Fig.~\ref{CDgraph}.b,d,f. If the CD root is decoded, then the CD graph is decoded via chain decoding, along with the buffered packets.  If $l_P{\rightarrow}l_S$,  $l_S{\rightarrow}l_P$ or  $l_P{\leftrightarrow}l_S$, respectively, the new configuration becomes the one depicted in Fig.~\ref{CDgraph}.b with $b=1$, Fig.~\ref{CDgraph}.f with $b=0$ or Fig.~\ref{CDgraph}.d with $b=0$, respectively. Finally, if the PU packet is decoded, the new configuration becomes  the one depicted in Fig.~\ref{CDgraph}.g. Once the CD root and the PU packet are connected (Figs.~\ref{CDgraph}.b-f), it is optimal to transmit a new packet: this choice maximizes the chance of connecting it to the CD graph and leverage interference cancellation in the future; retransmitting the CD root would be redundant, since it is already connected to the CD graph. In the configuration of Fig.~\ref{CDgraph}.g,  the PU packet is known at SUrx, and thus its interference is cancelled. In this case, it is optimal to retransmit the CD root to maximize the chances of decoding the CD graph, by taking advantage of the interference-free channel.
  \vdo{-3mm}
  \vone{-5mm}
\subsection{SU access policy}
\label{sec:policies}
At the beginning of slot $t$, given the history up to slot $t$, $\mathcal H^t{=}(y_{P,0}^{t-1},y_{S,0}^{t-1},a_{S,0}^{t-1})$, SUtx selects $a_{S,t}{=}1$ with probability 
$\mu_{t}(\mathcal H^t)$, and $a_{S,t}{=}0$ otherwise, where $\mu_t$ denotes the access policy. If $a_{S,t}{=}1$, the CD protocol  described in Sec.~\ref{buffering} dictates which 
 packet to transmit (label $l_S$).
 \vdo{-2mm}
 \vone{-5mm}
\section{Optimization Problem}
\label{perfopt}
 We define the \emph{average long-term} PU throughput as
 \begin{align}
 \label{barTp}
 \bar T_P(\mu)\triangleq\lim_{D\to\infty}
R_p\mathbb E\left[ \frac{1}{D}\sum_{t=0}^{D-1}(1-\rho_{a_{S,t}})\right],
 \end{align}
 where $D$ is the horizon duration, the expectation is taken with respect to the sequence $\{a_{S,t},t{\geq}0\}$ generated by the access  policy $\mu$, and the decoding outcomes at SUrx and PUrx. This metric is equivalent to the "stable throughput," which guarantees stability in  a systems where packets generated in upper layers are stored in queues before transmission \cite{Kulkarni}. Since $\rho_{a_{S,t}}{=}\rho_{0}{+}a_{S,t}(\rho_{1}{-}\rho_{0})$, one can rewrite
  \begin{align}
  \label{equiv}
 \bar T_P(\mu)=
 \bar T_{P,\max}\Bigr[1-\nabla(\mu)\Bigr],
 \end{align}
 where $\bar T_{P,\max}{\triangleq}R_p(1{-}\rho_{0})$ is the maximum PU throughput, achievable when the SU remains always idle, and we have defined the PU throughput degradation, relative to the maximum throughput $\bar T_{P,\max}$, as 
  \begin{align}
  \label{nabla}
\nabla(\mu)\triangleq
\frac{\rho_{1}-\rho_{0}}{1-\rho_{0}}
\lim_{D\to\infty}
 \mathbb E\left[\frac{1}{D}\sum_{t=0}^{D-1}a_{S,t}\right].
 \end{align}
 We can interpret $\nabla(\mu)$ as the throughput loss experienced by the PU as a result of the activity of the SU, which should be limited to reflect higher layer QoS constraints \cite{Zhang}. Similarly, we define the average long-term SU throughput as
 \begin{align}
 \bar T_S(\mu)\triangleq\lim_{D\to\infty}
\mathbb E\left[ \frac{1}{D}\sum_{t=0}^{D-1}r_{S,t}\right],
 \end{align}
where $r_{S,t}$ is the instantaneous throughput accrued via CD.

The goal is to design the SU access policy $\mu$ so as to maximize $\bar T_S(\mu)$,
subject to a maximum PU throughput degradation constraint $\nabla_{th}{\in}(0,1)$ (alternatively, subject to a minimum PU throughput $\bar T_{P,\min}=\bar T_{P,\max}(1-\nabla_{th})$ via (\ref{equiv})),
\begin{align}
\label{optP1}
&\mathbf{OP:}\ \mu^*=\arg\max_{\mu} \bar T_S(\mu),\ \text{s.t.\ }\nabla(\mu)\leq\nabla_{th}.
\end{align}
Note from (\ref{nabla}) that $\nabla(\mu)$ is maximum when $a_{S,t}=1,\forall t$, yielding
$\nabla(\mu)\leq\nabla_{\max}\triangleq \frac{\rho_{1}-\rho_{0}}{1-\rho_{0}}$; then, if $\nabla_{th}\geq \nabla_{\max}$, the constraint in $\mathbf{OP}$ becomes inactive.

The SU throughput and interference are cumbersome to compute in this form, since the outcome of CD depends on the specific instance of the CD graph. As shown in \cite{MichelusiCD}, a simplification can be obtained using the concept of \emph{virtual decodability}.
\vdo{-2mm}
\begin{definition}
\label{virtualdecod}
A packet $l$ in the CD graph is \emph{virtually decodable} if it becomes decodable by initiating CD at the CD root, following the directed edges in the CD graph (CD root excluded).
Otherwise, we say it is \emph{virtually undecodable}.
\end{definition}
Based on this definition, if $l$ is virtually decodable and the CD root is decoded, then $l$ is also decoded via CD. Therefore, if one guarantees to decode the CD root with probability one, eventually any virtually decodable $l$ will also be decoded. Indeed, this is the case: according to the optimal CD rules \cite{MichelusiCD}, as explained in Sec.~\ref{buffering}, the CD root is retransmitted (at least) at the beginning of each ARQ window (Fig.~\ref{CDgraph}.a). Eventually, it will be decoded, triggering chain decoding over the entire CD graph; thus, $l$ can be considered \emph{virtually} decoded, even if it has not been \emph{currently} decoded. As a result, there is no loss of generality, in terms of average throughput, if one counts the  \emph{virtually decodable} packets at the present time, rather than at the future time when they are actually decoded via CD. Based on this intuition, in \cite{MichelusiCD} we have shown that $\bar T_S(\mu)$ can be expressed as
\begin{align}
\label{TSvirtual}
\bar T_{S}(\mu)=\lim\inf_{D\to\infty}\mathbb E\left[\frac{1}{D}\sum_{t=0}^{D-1}v_S(a_{S,t},\mathbf s_{t})\right],
\end{align}
where $v_S(a_{S,t},\mathbf s_{t})$ is the \emph{expected virtual instantaneous throughput} (which counts the virtually decoded SU packets in addition to the currently decoded ones),
whose analytical expression is provided in Sec.~\ref{transprob}, and $\mathbf s_t=(\Phi,b)$ is the state of the CD protocol:
\begin{itemize}[leftmargin=0.3cm]
 \item $\Phi$ denotes the \emph{virtual} knowledge of the current PU packet $l_P$ at SUrx,
 and takes values in the set $\{\stackrel{\leftrightarrow}{\K},\stackrel{\rightarrow}{\K},\U\}$.
"K" denotes that the current PU packet $l_P$ is \emph{virtually decodable} at SUrx
(\emph{i.e.}, either it has been decoded in a previous slot by SUrx as in Fig.~\ref{CDgraph}.g, or it is virtually decodable as in Fig.~\ref{CDgraph}.c-f); in contrast,  "U" denotes the complementary event that \emph{$l_P$ is virtually undecodable} (Fig.~\ref{CDgraph}.a-b).
 The unidirectional or bidirectional arrow above "K" indicates the type of edge connecting $l_P$ to the CD root. In particular,  $\Phi=\stackrel{\leftrightarrow}{\K}$ (Fig.~\ref{CDgraph}.c-d) indicates that $l_P$ and the CD root are mutually decodable after removing their respective interference, \emph{i.e.}, $l_P\leftrightarrow [\text{CD root}]$; $\Phi=\stackrel{\rightarrow}{\K}$ indicates that either $l_P$ is known (Fig.~\ref{CDgraph}.g), or it can be decoded via CD after decoding the CD root, but not vice versa ($[\text{CD root}]\rightarrow l_P$ but not $l_P\rightarrow [\text{CD root}]$, Fig.~\ref{CDgraph}.e-f).
 \item if $\Phi{=}\U$ ($l_P$ is virtually undecodable), $b$ denotes the number of \emph{virtually undecoded} SU packets $l_S^{[1]},l_S^{[2]},\dots,l_S^{[b]}$ transmitted within the current ARQ window of $l_P$, such that $l_P{\rightarrow}l_S^{[i]}$, as in Fig.~\ref{CDgraph}.b. If $l_P$ becomes virtually decodable in state $\U$, then its interference can  be virtually removed, and thus $l_S^{[i]},i=1,2,\dots,b$ become virtually decodable as well; in contrast, if $l_P$ is not virtually decoded within the end of its ARQ window, then $l_S^{[1]},\dots,l_S^{[b]}$ remain undecoded and are discarded, since $l_P$ will not be transmitted again. Reliability of these SU packets may be enforced via retransmissions requested by higher layer protocols. We set $b{=}0$ when $\Phi{\in}\{\stackrel{\leftrightarrow}{\K},\stackrel{\rightarrow}{\K}\}$, since $l_P$ is virtually decodable in these cases and the SU channel is, virtually, interference free.
\end{itemize} 
When a new ARQ window begins, the new PU packet is virtually undecodable and $b{=}0$ (Fig.~\ref{CDgraph}.a), hence the new state becomes $(\U,0)$.
\vdo{-6mm}
\vone{-5mm}
\section{Analysis}
\label{analysis}
Under this equivalent formulation, the operation of the SU is a Markov decision process (MDP) \cite{Bertsekas}, with state $\mathbf s_{t}\in\mathcal S$,\footnote{For compactness, we write state $(\U,b)$ as $b$, $(\stackrel{\leftrightarrow}{\K},0)$ as $\stackrel{\leftrightarrow}{\K}$, $(\stackrel{\rightarrow}{\K},0)$ as $\stackrel{\rightarrow}{\K}$.}
infinite (but countable) state space
\begin{align}
& \mathcal S=
\left\{b|b\geq 0\vphantom{\stackrel{\leftrightarrow}{\K},\stackrel{\rightarrow}{\K}}\right\}
\cup
\left\{\stackrel{\leftrightarrow}{\K},\stackrel{\rightarrow}{\K}
\right\},
\end{align}
action $a_{S,t}\in\{0,1\}$, reward $v_S(a_{S,t},\mathbf s_{t})$ (to compute the SU throughput (\ref{TSvirtual})) and cost $\frac{\rho_{1}-\rho_{0}}{1-\rho_{0}}a_{S,t}$ (to compute the PU throughput degradation (\ref{nabla})). Thus, the optimal solution of {\bf OP} is a \emph{stationary} and \emph{state-dependent} policy \cite{Ross1989}, $\mu_{t}(\mathcal H^t)=\mu(\mathbf s_{t}),\ \forall t$. We let $\mathcal U$ be the set of such policies, 
\begin{align}
\mathcal U\equiv\{\mu:\mathcal S\mapsto[0,1]\}.
\end{align}

The transition probabilities and rewards $v_S(a_{S,t},\mathbf s_{t})$ of the MDP are characterized in Sec.~\ref{transprob}. Then, in Sec.~\ref{analysis}-B, we investigate the 
optimal SU access policy.

\vdo{-3mm}
\subsection{Virtual throughput and transition probabilities}
\label{transprob}
In state $b$ (Fig.~\ref{CDgraph}.a-b), $v_S(a_{S},b)$ is given by\footnote{Note that the SU packets in the "Old CD graph" do not appear in the expression of the virtual throughput, since they have already been virtually decoded in previous ARQ windows.}
\begin{align}
\label{G1}
v_S(a_{S},b)
=&
R_s\bigr[a_{S}(\delta_{sp}+\delta_{s})+D_pb\bigr].
\end{align}
In fact, with probability $D_p$, $l_P$ becomes virtually decodable, along with the $b$ buffered SU packets connected to it; if $a_S{=}1$, $l_S$ is decoded successfully with probability $\delta_{sp}+\delta_{s}$, due to the interference from the PU signal.
\begin{remark}
Note that the probability of \emph{virtually} decoding $l_P$ is $D_p$, irrespective of whether SUtx transmits or remains idle; in fact, when SUtx transmits $l_S$, $l_P$ is decoded with probability  $\delta_{p}+\delta_{sp}$ (see Fig.~\ref{decevents}), and it is virtually decoded if $l_S\rightarrow l_P$ (with probability $\upsilon_p$) or $l_P\leftrightarrow l_S$ (with probability $\upsilon_{sp}$), yielding $D_p=\delta_{p}+\delta_{sp}+\upsilon_{p}+\upsilon_{sp}$ as the overall probability of (possibly, only virtually) decoding $l_P$. Then, the \emph{virtual} decodability of the PU packet is not hampered by the interference caused by the SU's own signal.
\end{remark}
In state $\stackrel{\leftrightarrow}{\K}$ (Fig.~\ref{CDgraph}.c-d),\footnote{Note that in the configurations of Fig.~\ref{CDgraph}.c-f, the $b$ SU packets such that $l_P\rightarrow l_S^{[1]},l_S^{[2]},\dots,l_S^{[b]}$ are not counted in the virtual throughput, since they have already been virtually decoded in the transitions leading to these configurations (e.g., from Fig.~\ref{CDgraph}.b to Fig.~\ref{CDgraph}.c).}
    \begin{align}
    \label{G2}
v_S(a_{S},\stackrel{\leftrightarrow}{\K})
&{=}
R_s\bigr[a_{S}(D_s-\upsilon_{sp})+D_p\bigr].
\end{align}
In fact, since $l_P$ is virtually decodable, $l_S$ can be decoded successfully with probability $D_s$ (since the channel is, virtually, interference-free), thus accruing the term $a_{S}D_s$.
With probability $[a_{S}(\delta_p{+}\delta_{sp})+(1{-}a_{S})D_p]$, $l_P$ is decoded;
it follows that the CD root is decoded (since $[\text{CD root}]{\leftrightarrow}l_P$ in state $\Phi{=}\stackrel{\leftrightarrow}{\K}$), thus accruing one unit of throughput. Finally, with probability $a_S\upsilon_p$, the transmission outcome is such that $l_S{\rightarrow}l_P$; it follows that $l_S$ becomes the new CD root leading to the new configuration of Fig.~\ref{CDgraph}.e, and the previous CD root is virtually decoded (since $[\text{previous CD root}]{\leftrightarrow}l_P$ in state $\Phi{=}\stackrel{\leftrightarrow}{\K}$), thus accruing one unit of throughput (see \cite{MichelusiCD}). We obtain (\ref{G2}) by adding up all these terms. Finally, in state $\stackrel{\rightarrow}{\K}$, 
      \begin{align}
      \label{G3}
v_S(a_{S},\stackrel{\rightarrow}{\K})
=&
a_{S}R_sD_s,
\end{align}
since $l_P$ is virtually decodable and the channel is (virtually) interference-free. We now derive the transition probabilities $P(\mathbf x|\mathbf s,a_S)\triangleq\mathbb P(\mathbf s_{t+1}{=}\mathbf x|\mathbf s_{t}{=}\mathbf s,a_{S,t}{=}a_S)$, by adapting those in \cite{MichelusiCD} to the model of this paper, with backlogged PU and infinite ARQ deadline.  
From state $b\geq 0$, 
\begin{align}
\label{tp1}
&
\!\!\!\!\!P(\mathbf x|b,a_S)
{=}
\left\{\begin{array}{ll}
1{-}\rho_{a_{S}}(D_p{+}a_{S}\upsilon_s), & \mathbf x{=}0,\text{ if }b=0,
\\
1-\rho_{a_{S}}, &\mathbf x{=}0,\text{ if }b>0,
\\
\rho_{a_{S}}\left(1{-}D_p{-}a_{S}\upsilon_s\right), &\mathbf x{=}b,\text{ if }b>0,
\\
\rho_1a_{S}\upsilon_s, & \mathbf x{=}b+1,
\\
\rho_1a_{S}\upsilon_{sp}, &\mathbf x{=}\stackrel{\leftrightarrow}{\K},
\\
\rho_{a_{S}}\left(D_p-a_{S}\upsilon_{sp}\right), &\mathbf x{=}\stackrel{\rightarrow}{\K}.
\end{array}\right.
\end{align}
In fact, the PU transmission succeeds with probability $1{-}\rho_{a_{S}}$; in this case,
a new ARQ window begins with a new PU packet, which is virtually undecodable to the SU, so that the new state becomes $\mathbf x{=}0$. If the transmission outcome is such that $l_P{\rightarrow}l_S$ and the PU fails, then the signal is buffered and $b$ increases by one unit, thus the state becomes $\mathbf x{=}b{+}1$. If the PU fails and $l_P$ is virtually decoded (with probability  $\rho_{a_{S}}D_p$), then the new state becomes $\mathbf x{=}\stackrel{\rightarrow}{\K}$ or $\mathbf x{=}\stackrel{\leftrightarrow}{\K}$, depending on whether $l_S{\rightarrow}l_P$ (with probability $D_p{-}a_{S}\upsilon_{sp}$) or $l_S{\leftrightarrow}l_P$ (with probability $a_{S}\upsilon_{sp}$), respectively. Otherwise, the state remains $\mathbf x{=}b$. From state $\stackrel{\leftrightarrow}{\K}$, 
\begin{align}
\label{tp3}
&
\!\!\!P(\mathbf x|\stackrel{\leftrightarrow}{\K},a_S)
{=}
\left\{\begin{array}{ll}
\!1-\rho_{a_{S}},&\!\!\!\mathbf x=0,
\\
\!\rho_{a_{S}}\left(1{-}D_p{+}a_{S}\upsilon_{sp}\right),&\!\!\!\mathbf x=\stackrel{\leftrightarrow}{\K},
\\
\!\rho_{a_{S}}\left(D_p-a_{S}\upsilon_{sp}\right), &\!\!\!\mathbf x=\stackrel{\rightarrow}{\K}.
\end{array}\right.
\!\!
\end{align}
In fact, the PU transmission succeeds with probability $1{-}\rho_{a_{S}}$ and the new state becomes $\mathbf x=0$. If the PU fails and the decoding outcome is such that $l_S\rightarrow l_P$, $l_S$ becomes the new CD root, and the new state becomes $\mathbf x=\stackrel{\rightarrow}{\K}$. Otherwise, the state does not change. From state $\stackrel{\rightarrow}{\K}$, 
\begin{align}
\label{tp4}
&P(\mathbf x|\stackrel{\rightarrow}{\K},a_S)
=
\left\{\begin{array}{ll}
1-\rho_{a_{S}},& \mathbf x=0,
\\
\rho_{a_{S}},& \mathbf x=\stackrel{\rightarrow}{\K}.
\end{array}\right.
\end{align}
In fact, with probability $1{-}\rho_{a_{S}}$ the PU succeeds and a new ARQ window begins. Otherwise, the state does not change.
\vdo{-5mm}
\vone{-5mm}
\subsection{Optimal SU access policy}
\label{struct}
In this section, we derive the  optimal SU access policy $\mu^*$ and its performance
in closed form. The main result is given in Theorems~\ref{thm:policy1} and~\ref{thm:policy2},
whose proof is provided in Sec.~\ref{sec:mainproof}. We let $\bar T_S^{(GA)}(\epsilon)\triangleq\epsilon R_sD_s$ be the \emph{genie-aided SU throughput} when SUrx has non-causal knowledge of the PU packets and can remove their interference (hence the success probability is $D_s$ in each slot), and SUtx transmits with probability $\epsilon=\min\left\{\frac{\nabla_{th}}{\nabla_{\max}},1\right\}$ to attain the constraint $\nabla_{th}$.
We let $\pi_{\mu}$ be the steady-state probability of the MDP under policy $\mu$.

Being genie-aided, $\bar T_S^{(GA)}(\epsilon)$ is an upper bound to the SU throughput.
A simple scheme to attain it is as follows:  SUtx remains idle until SUrx decodes the PU packet (state $\stackrel{\rightarrow}{\K}$); hence, it transmits with probability $\mu(\stackrel{\rightarrow}{\K})$ until the end of the ARQ window. By transmitting only in state $\stackrel{\rightarrow}{\K}$ when $l_P$ is known at SUrx, the genie-aided throughput $\bar T_S^{(GA)}$ is attained since SUrx can remove the interference of $l_P$ from the received signal, as in the genie-aided  case. If the PU throughput degradation constraint $\nabla_{th}$ augments, the access probability in state $\stackrel{\rightarrow}{\K}$ may be increased accordingly, so as to accrue larger SU throughput, until it becomes $\mu(\stackrel{\rightarrow}{\K}){=}1$. At this point, transitions from state $0$ (where the SU remains idle) to state $\stackrel{\rightarrow}{\K}$ occur with probability $\rho_0D_p$ (SUrx decodes the PU packet and the PU requests a retransmission); transitions from state $\stackrel{\rightarrow}{\K}$  (where the SU transmits) to state $0$ occur with probability $(1{-}\rho_1)$ (the PU succeeds and a new ARQ window begins), hence $\pi_\mu(\stackrel{\rightarrow}{\K}){=}\rho_0D_p/[1{-}\rho_1{+}\rho_0D_p]$ at steady-state
and the SU transmits over a fraction $\pi_\mu(\stackrel{\rightarrow}{\K})$ of the slots, yielding
the PU throughput degradation
\begin{align}
\label{NAGA}
\nabla_{GA}\triangleq
\pi_\mu(\stackrel{\rightarrow}{\K})\nabla_{\max}
{=}\frac{\rho_0D_p}{1-\rho_1+\rho_0D_p}.
\end{align}
This result is summarized in the following theorem.
\begin{thm}
\label{thm:policy1}
If $\nabla_{th}{\leq}\nabla_{GA}$, then
\begin{align}
\label{pol1}
\left\{\begin{array}{l}
\mu^*(0)=0,\\
\mu^*(\stackrel{\leftrightarrow}{\K})=\mu^*(b)=1,\ \forall b>0,\\
\mu^*(\stackrel{\rightarrow}{\K})=
\frac{[1-\rho_0(1-D_p)]\nabla_{th}}{
[1-\rho_0(1-D_p)]\nabla_{GA}
+(\rho_{1}-\rho_{0})(\nabla_{th}-\nabla_{GA})};
\end{array}\right.
\end{align}
under such policy,
\begin{align}
\label{GAth}
\bar T_S(\mu^*)=\bar T_S^{(GA)}\left(\frac{\nabla_{th}}{\nabla_{\max}}\right),
\qquad
\nabla(\mu^*)=\nabla_{th}.
\end{align}
\end{thm}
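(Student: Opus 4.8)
The plan is to prove optimality by a matching upper and lower bound: first show that $\bar T_S^{(GA)}(\nabla_{th}/\nabla_{\max})$ upper-bounds $\bar T_S(\mu)$ for \emph{every} feasible $\mu$, then exhibit the policy $\mu^*$ of the statement as a feasible policy that attains this value while meeting the constraint with equality, $\nabla(\mu^*)=\nabla_{th}$. Since a feasible policy that achieves the upper bound is necessarily optimal, this settles $\mathbf{OP}$ without having to solve the constrained MDP explicitly.

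For the upper bound, I would argue as follows. By (\ref{nabla}), the constraint $\nabla(\mu)\le\nabla_{th}$ is exactly a cap on the long-term average access rate, $\bar a\triangleq\lim_{D\to\infty}\mathbb E[\frac1D\sum_{t}a_{S,t}]\le\nabla_{th}/\nabla_{\max}$. Next, each SU packet is transmitted in exactly one slot, and by the definition (\ref{Ds}) of $D_s$ as the interference-free decoding probability, the event that a transmitted packet is ever decodable — a necessary condition for it to contribute to the virtual throughput, even after CD resolves the chain — has probability at most $D_s$. Hence the expected total virtual throughput over a horizon is at most $R_sD_s$ times the expected number of transmissions, giving $\bar T_S(\mu)\le R_sD_s\,\bar a\le R_sD_s\,\nabla_{th}/\nabla_{\max}=\bar T_S^{(GA)}(\nabla_{th}/\nabla_{\max})$, consistent with the genie-aided interpretation already stated in the text.

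For achievability, I would analyze the Markov chain induced by $\mu^*$. Since $\mu^*(0)=0$, the transition law (\ref{tp1}) with $a_S=0$ sends state $0$ only into $\{0,\stackrel{\rightarrow}{\K}\}$ (the $\upsilon_s,\upsilon_{sp}$ terms carry an $a_S$ factor and vanish), and (\ref{tp4}) keeps $\stackrel{\rightarrow}{\K}$ inside $\{0,\stackrel{\rightarrow}{\K}\}$; thus $\{0,\stackrel{\rightarrow}{\K}\}$ is a closed recurrent class, the states $b>0$ and $\stackrel{\leftrightarrow}{\K}$ are transient, and the values $\mu^*(b),\mu^*(\stackrel{\leftrightarrow}{\K})$ do not affect the steady state. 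Writing $p\triangleq\mu^*(\stackrel{\rightarrow}{\K})$, the balance equation $\pi_\mu(0)\rho_0D_p=\pi_\mu(\stackrel{\rightarrow}{\K})[1-\rho_0-p(\rho_1-\rho_0)]$ with normalization yields $\pi_\mu(\stackrel{\rightarrow}{\K})=\rho_0D_p/[1-\rho_0(1-D_p)-p(\rho_1-\rho_0)]$. Using $v_S(0,0)=0$ from (\ref{G1}) and $v_S(a_S,\stackrel{\rightarrow}{\K})=a_SR_sD_s$ from (\ref{G3}), I get $\bar T_S(\mu^*)=p\,\pi_\mu(\stackrel{\rightarrow}{\K})R_sD_s$ and $\nabla(\mu^*)=\nabla_{\max}\,p\,\pi_\mu(\stackrel{\rightarrow}{\K})$.

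It then remains to choose $p$ so the constraint is tight. Imposing $\nabla(\mu^*)=\nabla_{th}$ gives a linear equation in $p$ whose solution is precisely the expression for $\mu^*(\stackrel{\rightarrow}{\K})$ in (\ref{pol1}); substituting $p=1$ returns $\nabla=\nabla_{GA}$, so the hypothesis $\nabla_{th}\le\nabla_{GA}$ guarantees $p\in[0,1]$ and hence feasibility of $\mu^*$. Finally, since $\bar T_S(\mu^*)=p\,\pi_\mu(\stackrel{\rightarrow}{\K})R_sD_s=\bar a\,R_sD_s$ with $\bar a=\nabla_{th}/\nabla_{\max}$, the achieved throughput equals $\bar T_S^{(GA)}(\nabla_{th}/\nabla_{\max})$ and meets the upper bound, proving both optimality and (\ref{GAth}). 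I expect the only delicate step to be the upper bound — in particular, justifying that, despite the chaining and the $D_pb$ bookkeeping term in (\ref{G1}), every transmitted packet contributes to the virtual throughput with probability at most $D_s$; the steady-state computation and the solution for $p$ are routine algebra.
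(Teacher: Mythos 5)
Your proof is correct and follows essentially the same route as the paper: the paper likewise takes the genie-aided throughput $\bar T_S^{(GA)}$ as an a priori upper bound and proves achievability by observing that $\mu^*(0)=0$ confines the chain to the recurrent class $\{0,\stackrel{\rightarrow}{\K}\}$ (so the SU is silent until $l_P$ is decoded, exactly as in the genie-aided case), with the steady-state and tightness computation carried out in the paragraph preceding the theorem. Your explicit per-transmission-slot accounting for the upper bound and the balance-equation solution for $p$ merely fill in steps the paper leaves implicit, and they reproduce (\ref{pol1}) and (\ref{GAth}) exactly.
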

\begin{proof}
(\ref{GAth}) shows that the genie-aided throughput $\bar T_S^{(GA)}(\epsilon)$ is achievable under policy (\ref{pol1}) when $\nabla_{th}{\leq}\nabla_{GA}$. Indeed, since $\mu^*(0){=}0$, from (\ref{tp1}) with $b=0$ it follows that the transition probability to states $\stackrel{\leftrightarrow}{\K}$ and $b>0$ is zero, yielding $\pi_{\mu^*}(\stackrel{\leftrightarrow}{\K}){=}0$, $\pi_{\mu^*}(1){=}0$ and, by induction, $\pi_{\mu^*}(b){=}0,\forall b{>}0$. Therefore, SUtx never accesses states $\stackrel{\leftrightarrow}{\K}$ and $b{>}0$; in other words, it remains silent until the PU packet $l_P$ is decoded at SUrx (state $\stackrel{\rightarrow}{\K}$), as in the genie-aided case.
\end{proof}
Policy (\ref{pol1}) is randomized only in state $\stackrel{\rightarrow}{\K}$. By the property of MDPs \cite{Bertsekas}, the same performance is achieved by a policy that selects probabilistically (or time-shares between) one of the following two modes of operation at the beginning of each ARQ window (in the recurrent state $0$): \underline{Idle}: \emph{The SU remains idle over the entire ARQ window}; \underline{IC} (interference cancellation): \emph{The SU transmits (with probability one) only after the current PU packet is decoded at SUrx}. With Idle mode, the SU does not interfere at all with the PU; with IC mode, it leverages knowledge of the PU packet to perform interference cancellation, in the event that the SU packet is decoded at SUrx. In the limit $\nabla_{th}{\to}0$, the SU selects Idle mode with probability $\xi_1{=}1$. When $\nabla_{th}{=}\nabla_{GA}$, the SU selects IC mode with probability $\xi_2{=}1$. When $0{<}\nabla_{th}{<}\nabla_{GA}$, the probabilities $\xi_1$ and $\xi_2{=}1{-}\xi_1$ are chosen so as to attain the PU throughput degradation constraint with equality. 

When $\nabla_{th}{>}\nabla_{GA}$, the SU access probability in state $\stackrel{\rightarrow}{\K}$ can no longer be increased; therefore, higher SU throughput can only be achieved by transmitting  in state $0$ as well. The optimal policy for this case is determined in the following theorem.
\begin{thm}
\label{thm:policy2}
If $\nabla_{GA}{<}\nabla_{th}<\nabla_{\max}$, then
\begin{align}
\label{pol2}
\!\!\!\left\{\begin{array}{l}
\mu^*(0)
=
\frac{\nabla_{th}-\nabla_{GA}}
{
\nabla_{\max}-\nabla_{GA}
+(\nabla_{\max}-\nabla_{th})\frac{(\rho_1-\rho_0)D_p+\rho_1\upsilon_s}{1-\rho_1+\rho_0D_p}
},
\\
\mu^*(\stackrel{\rightarrow}{\K})=\mu^*(\stackrel{\leftrightarrow}{\K})=\mu^*(b)=1,\ \forall b>0;
\end{array}\right.
\end{align}
under such policy,
\begin{align}
\label{perf2}
\nonumber
&\bar T_S(\mu^*)=
\bar T_S^{(GA)}\left(\frac{\nabla_{th}}{\nabla_{\max}}\right)
{-}
\frac{\rho_0D_p(1{-}\rho_1)\zeta R_s}{1-\rho_1(1-D_p)}
\frac{\nabla_{th}{-}\nabla_{GA}}{\nabla_{GA}\nabla_{\max}},
\\&
\nabla(\mu^*)=\nabla_{th},
\end{align}
where we have defined
\begin{align}
\label{zeta}
\zeta\triangleq \frac{\upsilon_{sp}}{1-\rho_1(1-D_p+\upsilon_{sp})}+\frac{\upsilon_s}{1-\rho_1(1-D_p)}.
\end{align}
Finally, if  $\nabla_{th}{\geq}\nabla_{\max}$, the "always transmit" policy
$\mu^*(\mathbf s){=}1,\forall \mathbf s\in\mathcal S$ is optimal, and
\begin{align}
\label{perf3}
&\bar T_S(\mu^*)=
\bar T_S^{(GA)}(1)
-\frac{(1-\rho_1)^2}{1-\rho_1(1-D_p)}\zeta R_s,
\\&
\nabla(\mu^*){=}\nabla_{\max}.
\end{align}
\end{thm}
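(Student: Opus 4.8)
The plan is to treat $\mathbf{OP}$ as a constrained average-reward MDP and to invoke the theory of constrained MDPs \cite{Ross1989}, by which an optimal stationary policy exists that is deterministic in all but at most one state (a single constraint admits a single randomization). Theorem~\ref{thm:policy1} already places this randomization in state $\stackrel{\rightarrow}{\K}$ for $\nabla_{th}{\leq}\nabla_{GA}$, saturating at $\mu^*(\stackrel{\rightarrow}{\K}){=}1$ exactly when $\nabla_{th}{=}\nabla_{GA}$. For $\nabla_{GA}{<}\nabla_{th}{<}\nabla_{\max}$ the interference budget can be raised only by transmitting in the recurrent state $0$, which in turn renders the states $\stackrel{\leftrightarrow}{\K}$ and the ladder $b{>}0$ reachable, since these are entered only when $a_S{=}1$ in (\ref{tp1}). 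I would therefore posit the structure $\mu^*(\stackrel{\rightarrow}{\K}){=}\mu^*(\stackrel{\leftrightarrow}{\K}){=}\mu^*(b){=}1$ for all $b{>}0$, with the lone randomization in state $0$, and justify it through the Lagrangian relaxation $\max_\mu[\bar T_S(\mu)-\lambda\nabla(\mu)]$: for fixed $\lambda$ this is an unconstrained average-reward MDP whose optimum is deterministic, and solving its Bellman optimality equation shows that at the critical $\lambda$ binding the constraint in this regime the greedy action is ``transmit'' throughout $\{\stackrel{\rightarrow}{\K},\stackrel{\leftrightarrow}{\K}\}\cup\{b{>}0\}$, leaving state $0$ as the unique site of randomization.

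Fixing this structure with $\mu^*(0){=}p\in[0,1]$ as the single free parameter, I would next solve the global balance equations induced by (\ref{tp1})--(\ref{tp4}) for the stationary distribution $\pi_\mu$; positive recurrence holds because state $0$ is reached from every state with probability at least $1{-}\rho_1{>}0$ in one step. The delicate point is the countably infinite ladder $b{=}1,2,\dots$: from (\ref{tp1}) with $a_S{=}1$ each level climbs to $b{+}1$ with probability $\rho_1\upsilon_s$, leaks to $\{0,\stackrel{\rightarrow}{\K},\stackrel{\leftrightarrow}{\K}\}$, and is entered only from state $0$. Summing this geometric ladder together with the mass funneled through $\stackrel{\leftrightarrow}{\K}$ is exactly what produces the constant $\zeta$ of (\ref{zeta}): the first term collects the contribution routed through $\stackrel{\leftrightarrow}{\K}$ (entered w.p. $\rho_1\upsilon_{sp}$) and the second the contribution climbing the ladder (w.p. $\rho_1\upsilon_s$), divided by the geometric factors $1{-}\rho_1(1{-}D_p{+}\upsilon_{sp})$ and $1{-}\rho_1(1{-}D_p)$ that govern the sojourn in $\stackrel{\leftrightarrow}{\K}$ and in the ladder, respectively.

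With $\pi_\mu$ in closed form, the degradation is $\nabla(\mu){=}\nabla_{\max}\mathbb E_{\pi_\mu}[\mu(\mathbf s)]{=}\nabla_{\max}(1{-}(1{-}p)\pi_\mu(0))$, which I would show is continuous and strictly increasing in $p$, sweeping $[\nabla_{GA},\nabla_{\max}]$ as $p$ sweeps $[0,1]$; inverting $\nabla(\mu){=}\nabla_{th}$ then yields the expression for $\mu^*(0)$ in (\ref{pol2}). Substituting $\pi_\mu$ and the per-state rewards (\ref{G1})--(\ref{G3}) into $\bar T_S(\mu){=}\sum_{\mathbf s}\pi_\mu(\mathbf s)\bar v_S(\mathbf s)$ and simplifying gives (\ref{perf2}); the subtracted term is the throughput lost to buffered SU packets discarded when $l_P$ is never virtually recovered within its ARQ window, explaining why it scales with the excess budget $\nabla_{th}{-}\nabla_{GA}$ and with $\zeta$. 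For the boundary case $\nabla_{th}{\geq}\nabla_{\max}$ the constraint is inactive, so $\mathbf{OP}$ reduces to the unconstrained $\max_\mu\bar T_S(\mu)$; since $v_S(1,\mathbf s){\geq}v_S(0,\mathbf s)$ in every state and transmitting also steers the chain toward states of higher continuation reward, the Lagrangian with $\lambda{=}0$ is solved by the always-transmit policy, and setting $p{=}1$ in the preceding computation collapses $\bar T_S$ and $\nabla$ to (\ref{perf3}) and $\nabla_{\max}$.

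The main obstacle I anticipate is the structural step of the first paragraph: verifying through the Bellman optimality equation of the Lagrangian MDP that the greedy action is indeed ``transmit'' in every state other than $0$, and that no alternative placement of the single randomization (for instance, randomizing in a ladder state while idling in state $0$) can do better. The stationary-distribution and throughput computations, although lengthy, are routine once this policy form is fixed; the genuine content lies in the monotonicity of the relaxed MDP's optimal actions in $\lambda$, which is what forces state $0$ to be the last to activate and pins the randomization there.
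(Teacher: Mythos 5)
Your overall plan is sound and rests on the same foundational fact the paper uses (a single average-cost constraint admits an optimal stationary policy randomized in at most one state), but your route to the policy structure differs from the paper's and, more importantly, defers exactly the step that carries the mathematical content. The paper does not go through a Lagrangian/Bellman argument: it works geometrically on the $(\nabla(\mu),\bar T_S(\mu))$ plane, characterizes the Pareto envelope of the convex hull of deterministic policies as a polygonal chain of vertices $\mu^{[1]},\mu^{[2]},\dots$, proves in a separate lemma (with an appendix computation) that the unconstrained maximizer $\mu^{[1]}$ is \emph{uniquely} Always-TX, and then identifies $\mu^{[2]}$ by explicitly evaluating the ``access efficiency'' $\eta(\mathbf s)$ --- the throughput loss per unit interference reduction from flipping the action in a single state $\mathbf s$ --- in closed form for every $\mathbf s\in\{m\geq 0\}\cup\{\stackrel{\leftrightarrow}{\K},\stackrel{\rightarrow}{\K}\}$, and showing $\eta(0)$ is the strict minimizer. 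That computation is precisely what ``pins the randomization'' to state $0$; your proposal acknowledges this as the main obstacle but does not carry it out, so as written the proof is incomplete at its central step.

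Two specific points where your shortcuts would fail. First, your justification for the unconstrained case ($\nabla_{th}\geq\nabla_{\max}$) --- that $v_S(1,\mathbf s)\geq v_S(0,\mathbf s)$ pointwise and that transmitting ``steers the chain toward states of higher continuation reward'' --- is not a proof and is dangerously close to being false in nearby settings: the paper explicitly remarks that in the model without chain decoding, Always-TX is \emph{not} the throughput-maximizing policy, so no myopic or purely monotonicity-based argument can settle this; one must actually compare stationary averages (the paper's Lemma~\ref{Lemmamu1} and its appendix). Note also that transmitting in state $0$ trades $\rho_0 D_p$ against $\rho_1(D_p-\upsilon_{sp})$ in the transition to $\stackrel{\rightarrow}{\K}$, so even the ``steering'' direction is not sign-definite without computation. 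Second, your structural claim that the lone randomization must sit in state $0$ because only transmitting there raises the budget above $\nabla_{GA}$ ignores alternative single-randomization policies (e.g., deterministic transmit in state $0$ combined with idling or randomizing in a ladder state or in $\stackrel{\leftrightarrow}{\K}$); ruling these out is exactly the content of the $\eta(\mathbf s)$ comparison in (\ref{etamum})--(\ref{etaK}). The remaining ingredients of your proposal --- positive recurrence via the uniform return probability $1-\rho_1$ to state $0$, the geometric summation over the ladder and the sojourn in $\stackrel{\leftrightarrow}{\K}$ producing $\zeta$, and inverting $\nabla(\mu)=\nabla_{th}$ in the single free parameter $\mu^*(0)$ --- do match the paper's computation and would go through once the structure is established.
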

 
If $\nabla_{GA}{<}\nabla_{th}{<}\nabla_{\max}$, SUtx transmits with non-zero probability in state $0$ until it reaches one of the states $1$, $\stackrel{\leftrightarrow}{\K}$ or $\stackrel{\rightarrow}{\K}$, see (\ref{pol2}). From this point on, it transmits with probability one until the end of the ARQ window. This policy is randomized only in state $0$. By the property of MDPs \cite{Bertsekas}, the same performance is achieved by a policy that selects probabilistically (or time-shares between) one of the following two modes of operation at the beginning of each ARQ window: IC mode as before; \underline{Always-TX}: \emph{The SU always transmits over the entire ARQ window}. With Always-TX mode, the SU maximizes the number of SU packets transmitted over the ARQ window and builds up the CD graph; these packets may become decodable via CD, hence this strategy maximizes the aggregate throughput accrued via CD. In the limit $\nabla_{th}{\to}\nabla_{GA}$, $\mu^*(0){\to}0$ and the SU selects IC mode with probability $\xi_2{=}1$. In the limit $\nabla_{th}{\to}\nabla_{\max}$, $\mu^*(0){\to}1$  and the SU selects  Always-TX mode with probability $\xi_3{=}1$. When $\nabla_{GA}{<}\nabla_{th}{<}\nabla_{\max}$, $\xi_2$ and $\xi_3{=}1{-}\xi_2$ are chosen so as  to attain the PU throughput degradation constraint with equality.
Finally, if $\nabla_{th}\geq\nabla_{\max}$, the constraint becomes inactive and the SU  selects  Always-TX mode deterministically so as to maximize the benefits of CD.

Overall, the optimal policy $\mu^*$ reflects a randomization among Idle, IC and Always-TX modes, with probabilities $\xi_1$, $\xi_2$ and $\xi_3=1{-}\xi_1{-}\xi_2$, respectively. If $\nabla_{th}{\leq}\nabla_{GA}$, $\xi_3{=}0$ so that only Idle and IC modes are used; if $\nabla_{GA}{<}\nabla_{th}{<}\nabla_{\max}$, $\xi_1{=}0$ so that only IC and Always-TX modes are used; finally, if $\nabla_{th}{\geq}\nabla_{\max}$,  $\xi_3=1$ so that only Always-TX mode is used
and the SU throughput is maximized.
\vone{-5mm}
\vdo{-3mm}
\subsection{Proof of Theorem~\ref{thm:policy2}}
\label{sec:mainproof}
We use a geometric approach inspired by \cite{MichelusiJSAC}, based on the properties of constrained MDPs \cite{Altman,Ross1989} to determine, in closed form, the optimal policy and its performance when $\nabla_{th}> \nabla_{GA}$. We make the following definition.
\begin{definition}[Deterministic/randomized policy]
A policy $\mu\in\mathcal U$ is deterministic if $\mu(\mathbf s)\in\{0,1\},\forall\mathbf s\in\mathcal S$; otherwise, $\mu$ is randomized. We let $\mathcal D\subset\mathcal U$ be the set of deterministic policies.
 \end{definition}
 In other words, $\mu\in\mathcal D$ takes a deterministic action in each state; however, the state sequence is random and governed by the transition probabilities under $\mu$, see (\ref{tp1})-(\ref{tp4}).
  \begin{figure}[t]
\centering  
\includegraphics[width=\w{.8}{.5}\linewidth,trim = 0mm 0mm 0mm 0mm,clip=false]{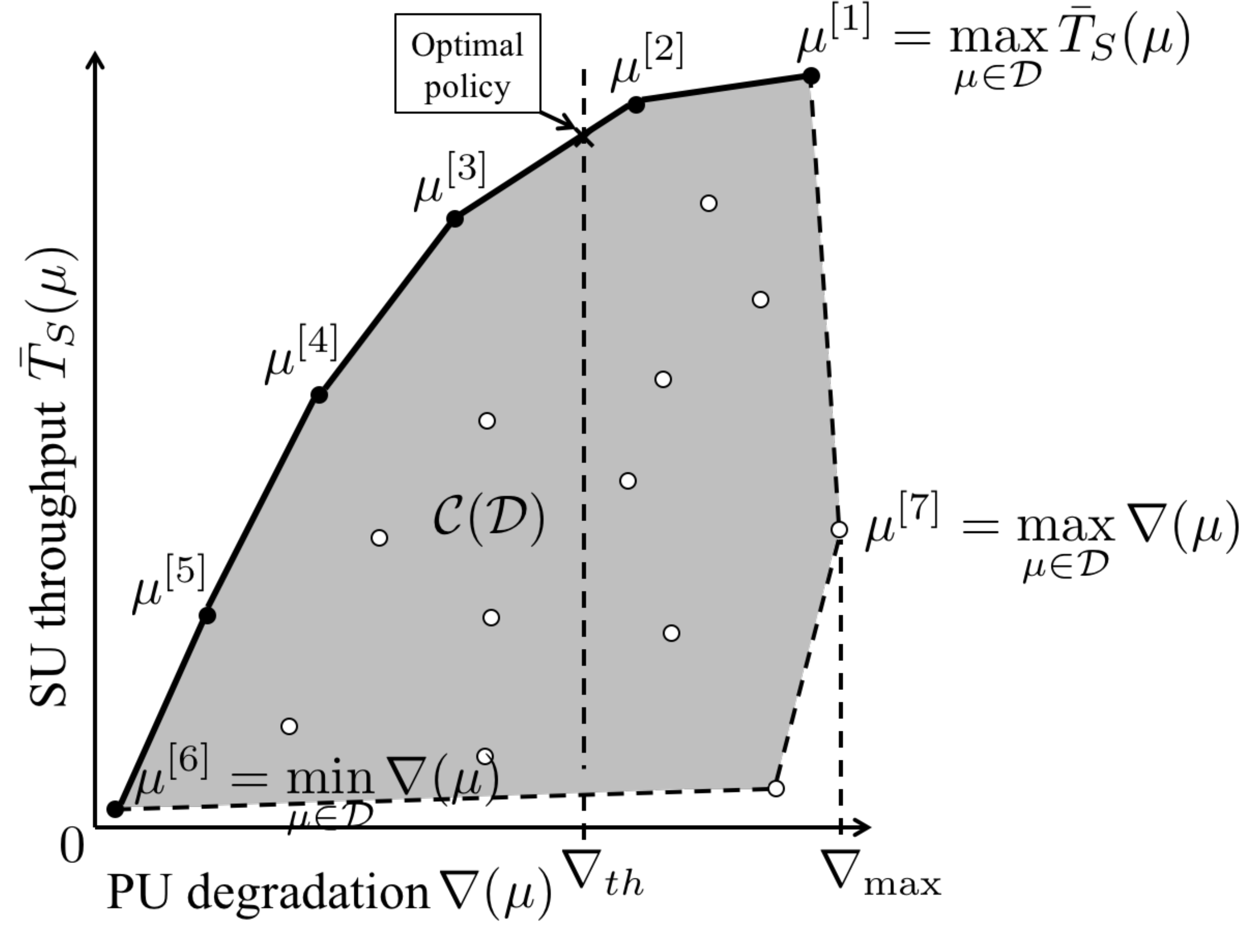}
\caption{Graphic representation of $\mathbf{OP}$.}
\vone{-10mm}
\vdo{-5mm}
\label{fig:txefficiency}
\end{figure}
Each deterministic policy $\mu\in\mathcal D$ attains a black or white circle in Fig.~\ref{fig:txefficiency}, located at coordinates $(\nabla(\mu),\bar T_S(\mu))$. The set of randomized policies, instead, attain the convex hull of the points with coordinates given by all the deterministic policies,\footnote{In fact, a randomized policy can be expressed equivalently as a time-sharing among deterministic policies.} denoted as $\mathcal C(\mathcal D)$, see Fig.~\ref{fig:txefficiency}. Thus, any point in $\mathcal C(\mathcal D)$
 can be achieved by a (possibly) randomized policy;  any point outside of $\mathcal C(\mathcal D)$ is, instead, unattainable.

According to {\bf OP}, the goal is, for a given PU throughput degradation constraint $\nabla_{th}$, to determine the optimal point $(\nabla(\mu^*),\bar T_S(\mu^*))$ within the convex hull $\mathcal C(\mathcal D)$, and the corresponding optimal policy $\mu^*$ maximizing the SU throughput $\bar T_S(\mu)$, see Fig.~\ref{fig:txefficiency}.
  We define the \emph{Pareto optimal envelope} of the convex hull $\mathcal C(\mathcal D)$, denoted as $\mathrm{PO}(\mathcal D)$,  as the set of points such as no improvement in the SU throughput $\bar T_S(\mu)$ is possible without causing additional degradation $\nabla(\mu)$ to the PU. This is indicated by the sequence of solid lines connecting the black circles in the figure. Mathematically,
 \begin{align}
 \nonumber
 &\mathrm{PO}(\mathcal D)=\bigr\{\vphantom{\sum}(\nabla(\mu),\bar T_S(\mu)):\mu\in\mathcal U, \forall \tilde\mu\in\mathcal U 
 \\&
 \text{ s.t. } \bar T_S(\tilde\mu)>\bar T_S(\mu)\Rightarrow \nabla(\tilde\mu)>\nabla(\mu)\bigr\}.
 \end{align}
  Accordingly, we define the set of Pareto optimal policies as
\begin{align}
\mathcal U_{\mathrm{PO}}\triangleq\bigr\{\mu\in\mathcal U:(\nabla(\mu),\bar T_S(\mu))\in\mathrm{PO}(\mathcal D)\bigr\}.
\end{align}
Note that any non Pareto optimal policy $\mu\notin\mathcal U_{\mathrm{PO}}$ is suboptimal since there exists $\tilde\mu\in\mathcal U$ such that $\bar T_S(\tilde\mu)\geq\bar T_S(\mu)$ and $\nabla(\tilde\mu)<\nabla(\mu)$. Thus, we have the following result.
 \begin{lemma}
 The optimal policy  is such that $\mu^*\in\mathcal U_{\mathrm{PO}}$.
 \end{lemma}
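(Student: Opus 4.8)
The plan is to prove the claim by contradiction, formalizing the observation recorded in the remark that immediately precedes the statement. Suppose $\mu^*$ solves $\mathbf{OP}$ yet fails to be Pareto optimal, i.e. $\mu^*\notin\mathcal U_{\mathrm{PO}}$, and I aim to derive a contradiction with the optimality of $\mu^*$.

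The first step is to negate the implication defining $\mathrm{PO}(\mathcal D)$. Since the defining condition depends on a policy only through its coordinates $(\nabla(\cdot),\bar T_S(\cdot))$, the membership $\mu^*\notin\mathcal U_{\mathrm{PO}}$ is equivalent to $\mu^*$ violating the implication $\bar T_S(\tilde\mu)>\bar T_S(\mu^*)\Rightarrow\nabla(\tilde\mu)>\nabla(\mu^*)$. Its negation asserts the existence of a policy $\tilde\mu\in\mathcal U$ with $\bar T_S(\tilde\mu)>\bar T_S(\mu^*)$ and $\nabla(\tilde\mu)\leq\nabla(\mu^*)$ holding simultaneously; this is precisely the witness $\tilde\mu$ named in the preceding remark.

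The second step is to check that $\tilde\mu$ is feasible for $\mathbf{OP}$ and thereby contradict optimality. As $\mu^*$ solves $\mathbf{OP}$, it is in particular feasible, so $\nabla(\mu^*)\leq\nabla_{th}$; combining this with $\nabla(\tilde\mu)\leq\nabla(\mu^*)$ yields $\nabla(\tilde\mu)\leq\nabla_{th}$, so $\tilde\mu$ satisfies the degradation constraint of $\mathbf{OP}$. But then $\tilde\mu$ is feasible and achieves $\bar T_S(\tilde\mu)>\bar T_S(\mu^*)$, contradicting the fact that $\mu^*$ maximizes $\bar T_S$ over all feasible policies; hence $\mu^*\in\mathcal U_{\mathrm{PO}}$. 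The argument is purely definitional and involves no computation, so there is no real obstacle; the only point requiring care is the negation in the first step, namely recognizing that the failure of Pareto optimality supplies a competitor that is at least as PU-friendly ($\nabla(\tilde\mu)\leq\nabla(\mu^*)$, with equality allowed) while strictly improving the SU throughput, which is exactly what keeps $\tilde\mu$ feasible and forces the contradiction.
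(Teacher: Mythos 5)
Your proof is correct and follows essentially the same route as the paper, which justifies the lemma with a one-sentence contradiction argument immediately before its statement. If anything, your version is the more careful one: you take the literal negation of the implication defining $\mathrm{PO}(\mathcal D)$ (a competitor with \emph{strictly} larger $\bar T_S$ and \emph{weakly} smaller $\nabla$), which is exactly what is needed to contradict optimality in $\mathbf{OP}$, whereas the paper's informal remark states the witness with the inequalities arranged as $\bar T_S(\tilde\mu)\geq\bar T_S(\mu)$ and $\nabla(\tilde\mu)<\nabla(\mu)$, which by itself would not rule out $\mu$ being a (non-unique) maximizer.
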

 
Hence, we can limit the search of the optimal policy within the set $\mathcal U_{\mathrm{PO}}$, which we aim to characterize.  Note that $ \mathrm{PO}(\mathcal D)$ is defined by a sequence of segments,  each with endpoints defined by a pair of deterministic policies $\mu^{[i]}$ and $\mu^{[i+1]}$, $i\geq 1$. Without loss of generality, the sequence $\{\mu^{[i]},i\geq 1\}$ is characterized by strictly decreasing values of the interference,
 $\nabla(\mu^{[i]})>\nabla(\mu^{[i+1]}),\forall i\geq 1$, and  $\mu^{[1]}$ is the deterministic policy which maximizes the  SU throughput (unconstrained), $\mu^{[1]}{=}\arg\max_{\mu\in\mathcal D}\bar T_S(\mu)$, derived in Lemma~\ref{Lemmamu1}.
Such sequence exhibits the following property.
 \begin{lemma}
$\{\mu^{[i]},\ i\geq 1\}$ defines
strictly decreasing values of $\nabla_{th}(\mu)$ and $\bar T_S(\mu)$, \emph{i.e.},
 \begin{align}
 \label{ordering}
\nabla_{th}(\mu^{[i]})>\nabla_{th}(\mu^{[i+1]}),
\quad
 \bar T_S(\mu^{[i]})>\bar T_S(\mu^{[i+1]}).
 \end{align}
 \end{lemma}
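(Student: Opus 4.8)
The plan is to read off both monotonicity statements from the geometry of the Pareto-optimal envelope $\mathrm{PO}(\mathcal D)$, viewed as the upper-left boundary of the convex hull $\mathcal C(\mathcal D)$ in the $(\nabla,\bar T_S)$ plane. The first inequality requires no work: by the labeling convention introduced just before the lemma, the vertices of $\mathrm{PO}(\mathcal D)$ are relabeled (without loss of generality) so that $\nabla(\mu^{[i]})>\nabla(\mu^{[i+1]})$ for all $i\geq 1$, which is precisely the interference ordering asserted. So the entire content is the strict decrease of $\bar T_S$ along this ordering.

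For the throughput, I would argue directly from the defining property of $\mathcal U_{\mathrm{PO}}$. Each $\mu^{[i]}$ is a vertex of $\mathrm{PO}(\mathcal D)$, hence $\mu^{[i]}\in\mathcal U_{\mathrm{PO}}$, which by definition means: for every $\tilde\mu\in\mathcal U$, $\bar T_S(\tilde\mu)>\bar T_S(\mu^{[i]})$ implies $\nabla(\tilde\mu)>\nabla(\mu^{[i]})$. Taking the contrapositive, $\nabla(\tilde\mu)\leq\nabla(\mu^{[i]})$ forces $\bar T_S(\tilde\mu)\leq\bar T_S(\mu^{[i]})$. Applying this with $\tilde\mu=\mu^{[i+1]}$, whose interference satisfies $\nabla(\mu^{[i+1]})<\nabla(\mu^{[i]})$ by the first inequality, yields immediately the weak monotonicity $\bar T_S(\mu^{[i+1]})\leq\bar T_S(\mu^{[i]})$. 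Equivalently, were $\bar T_S(\mu^{[i+1]})>\bar T_S(\mu^{[i]})$, the policy $\mu^{[i+1]}$ would attain strictly higher throughput at strictly lower interference than $\mu^{[i]}$, contradicting $\mu^{[i]}\in\mathcal U_{\mathrm{PO}}$.

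The step I expect to be the main obstacle is upgrading this to a \emph{strict} inequality, i.e., ruling out the degenerate tie $\bar T_S(\mu^{[i]})=\bar T_S(\mu^{[i+1]})$, since the strict-inequality form of the definition of $\mathrm{PO}(\mathcal D)$ does not by itself exclude a weakly dominated point. I would close this gap through the extreme-point structure of $\mathcal C(\mathcal D)$: $\mu^{[i]}$ and $\mu^{[i+1]}$ are distinct adjacent vertices of the convex hull, so the edge joining them is non-degenerate, and a throughput tie would make that edge horizontal; every point of a horizontal edge other than its lowest-interference endpoint is weakly dominated and therefore cannot lie on the upper-left boundary $\mathrm{PO}(\mathcal D)$. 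Equivalently, since $\mathcal C(\mathcal D)$ is convex and $\mu^{[1]}$ is taken to be the minimum-interference maximizer of $\bar T_S$ (cf. Lemma~\ref{Lemmamu1}), the envelope $\mathrm{PO}(\mathcal D)$ is the graph of a concave, strictly increasing function of $\nabla$, so its successive edges have strictly positive slopes $\lambda_i>0$ and $\bar T_S(\mu^{[i]})-\bar T_S(\mu^{[i+1]})=\lambda_i\,[\nabla(\mu^{[i]})-\nabla(\mu^{[i+1]})]>0$. Either route delivers the strict inequality $\bar T_S(\mu^{[i]})>\bar T_S(\mu^{[i+1]})$; all the care is in this strictness argument, the weak monotonicity being essentially automatic from Pareto optimality.
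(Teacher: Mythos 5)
Your proof is correct and follows essentially the same route as the paper's: the first inequality is the labeling convention, and a violation of the second would mean $\mu^{[i+1]}$ matches or exceeds the throughput of $\mu^{[i]}$ at strictly lower interference, contradicting the Pareto optimality of $\mu^{[i]}$. The paper states this contradiction in a single line; your additional care in excluding the throughput tie (which the strict-inequality form of the definition of $\mathrm{PO}(\mathcal D)$ does not literally rule out) is a sharpening of the same argument rather than a different one.
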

 \begin{proof}
If the above condition is not satisfied, \emph{i.e.}, $\nabla(\mu^{[i]})>\nabla(\mu^{[i+1]})$ and
$\bar T_S(\mu^{[i]})\leq\bar T_S(\mu^{[i+1]})$, then we achieve a contradiction on the Pareto optimality of $\mu^{[i]}$.
 \end{proof}

It follows that, given $\nabla_{th}>0$, one can determine $\mu^*$ and its performance as follows:
 \begin{itemize}[leftmargin=0.3cm]
 \item If $\nabla_{th}\geq\nabla(\mu^{[1]})$, then $\mu^*=\mu^{[1]}$; in fact, $\mu^{[1]}$ achieves the maximum unconstrained throughput and is feasible for the given value of $\nabla_{th}$;
\item Otherwise, let $i^*\geq 1$ be the unique index such that
$\nabla(\mu^{[i^*]})\geq\nabla_{th}>\nabla(\mu^{[i^*+1]})$; then, the optimal policy is given by a proper randomization (or time-sharing) between $\mu^{[i^*]}$ and $\mu^{[i^*+1]}$; 
we will characterize the form of this randomization throughout the proof.
 \end{itemize}
 
To characterize $ \mathrm{PO}(\mathcal D)$, we are left with the problem of finding the sequence $\{\mu^{[i]},i{\geq}1\}\subseteq\mathcal D$. To this end, we let $\mathcal D^{[i]}{\subseteq}\mathcal D$ be the set  of deterministic policies that interfere strictly less  than $\nabla(\mu^{[i]})$. Mathematically,
\begin{align}
\label{Ui}
\mathcal D^{[i]}\equiv
\left\{
\mu\in\mathcal D:\nabla(\mu)<\nabla(\mu^{[i]})
\right\}.
\end{align}
Then, by construction, $\mu^{[i+1]}$ is the deterministic policy which minimizes the slope of the segment connecting $(\nabla(\mu^{[i]}),\bar T_S(\mu^{[i]}))$ to $(\nabla(\mu),\bar T_S(\mu))$ over $\mu\in\mathcal D^{[i]}$, \emph{i.e.},
\begin{align}
\label{problemslope}
\mu^{[i+1]}=\arg\min_{\mu\in\mathcal D^{[i]}}\frac{\bar T_S(\mu)-\bar T_S(\mu^{[i]})}{\nabla(\mu)-\nabla(\mu^{[i]})},\ \forall i\geq 1.
\end{align}
In other words, $\mu^{[i+1]}$ is the deterministic policy that yields the minimum \emph{decrease} in SU throughput, relative to the decrease in PU throughput degradation.

Using this algorithm, we now determine $\mu^{[1]}$ and $\mu^{[2]}$. Lemma~\ref{Lemmamu1} states that $\mu^{[1]}$ is the Always-TX mode discussed in Sec.~\ref{struct}, and that it uniquely maximizes the interference $\nabla(\mu)$. That the Always-TX policy maximizes $\bar T_S(\mu)$ is an intuitive, but non trivial result; indeed on a setting without CD, it was  proved that Always-TX is not the throughput maximizing policy, see \cite{MichelusiITA11}. Then, Lemma~\ref{mu2} states that $\mu^{[2]}$ is the IC policy discussed in Sec.~\ref{struct}. It follows that, when $\nabla(\mu^{[1]})\geq\nabla_{th}>\nabla(\mu^{[2]})$, the optimal policy is obtained by time-sharing between the Always-TX policy $\mu^{[1]}$ and the IC policy $\mu^{[2]}$; alternatively, since the Always-TX and IC policies  differ only in state $0$, the same result is obtained by randomizing in state $0$,
yielding (\ref{pol2}).
\begin{lemma}
\label{Lemmamu1}
$\mu^{[1]}$ is \emph{uniquely} given by the Always-TX policy
\begin{align}
\mu^{[1]}(\mathbf s)=1,\ \forall \mathbf s\in\mathcal S.
\end{align}
Moreover, $\mathcal D^{[1]}\equiv\mathcal D\setminus\{\mu^{[1]}\}$.
\end{lemma}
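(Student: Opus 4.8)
The plan is to split the lemma into two independent claims: (i) the always-transmit policy $\mu^{[1]}(\mathbf s){=}1,\forall\mathbf s$ is the \emph{unique} maximizer of $\bar T_S(\mu)$ over $\mu\in\mathcal D$, which identifies $\mu^{[1]}$; and (ii) the same policy is the \emph{unique} maximizer of the interference $\nabla(\mu)$ over $\mu\in\mathcal D$, which immediately yields $\mathcal D^{[1]}=\mathcal D\setminus\{\mu^{[1]}\}$, because by (\ref{Ui}) the set $\mathcal D^{[1]}$ collects exactly those deterministic policies with $\nabla(\mu)<\nabla(\mu^{[1]})$. Claim (ii) is the easy half, and I would dispatch it first.

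For (ii), recall from (\ref{nabla}) that $\nabla(\mu)=\nabla_{\max}\sum_{\mathbf s}\pi_\mu(\mathbf s)\mu(\mathbf s)$, so $\nabla(\mu)=\nabla_{\max}$ iff the SU transmits on a set of full stationary measure, and the always-transmit policy attains $\nabla_{\max}$. It therefore suffices to show that every $\mu\neq\mu^{[1]}$ idles in some \emph{recurrent} state, forcing $\sum_{\mathbf s}\pi_\mu(\mathbf s)\mu(\mathbf s)<1$. Since $1-\rho_{a_S}\geq 1-\rho_1>0$ in every slot, the chain jumps to state $0$ within a geometrically distributed number of slots from any state, so state $0$ is positive recurrent under every $\mu$. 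If $\mu(0)=0$ we are done. Otherwise $\mu(0)=1$, and I pick a state $\mathbf s^\ast$ with $\mu(\mathbf s^\ast)=0$: if $\mathbf s^\ast\in\{\stackrel{\leftrightarrow}{\K},\stackrel{\rightarrow}{\K}\}$, it is reached from state $0$ in one step with the positive probabilities $\rho_1\upsilon_{sp}$ and $\rho_1(D_p-\upsilon_{sp})$ of (\ref{tp1}); if $\mathbf s^\ast=b^\ast\geq 1$ is minimal, then $\mu(0)=\cdots=\mu(b^\ast-1)=1$, so the ladder $0\!\to\!1\!\to\!\cdots\!\to\!b^\ast$ is traversed with positive probability $(\rho_1\upsilon_s)^{b^\ast}$ by (\ref{tp1}). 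In all cases $\mathbf s^\ast$ lies in the recurrent class of state $0$, so $\pi_\mu(\mathbf s^\ast)>0$ and $\nabla(\mu)<\nabla_{\max}$. Uniqueness here uses the non-degeneracy $\upsilon_s,\upsilon_{sp},D_p-\upsilon_{sp}>0$, guaranteeing all states are reachable; otherwise policies differing from $\mu^{[1]}$ only on unreachable states would tie.

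For (i), I would exploit the renewal/MDP structure: each new ARQ window starts in the reset state $0$, so state $0$ is a renewal point and the average reward is well defined. The first ingredient is that the one-slot reward is monotone in the action: from (\ref{G1})--(\ref{G3}), $v_S(1,\mathbf s)-v_S(0,\mathbf s)$ equals $R_s(\delta_{sp}+\delta_s)\geq 0$ in state $b$, $R_s(D_s-\upsilon_{sp})\geq 0$ in $\stackrel{\leftrightarrow}{\K}$, and $R_sD_s\geq 0$ in $\stackrel{\rightarrow}{\K}$. The decisive ingredient is that transmitting also steers the chain toward \emph{more valuable} states: the signed transition difference $P(\cdot|\mathbf s,1)-P(\cdot|\mathbf s,0)$ from (\ref{tp1})--(\ref{tp4}) removes mass $(\rho_1-\rho_0)$ from the reset state $0$ and places the positive shifts $+\rho_1\upsilon_s$ on $b{+}1$ and $+\rho_1\upsilon_{sp}$ on $\stackrel{\leftrightarrow}{\K}$, keeping the remaining mass within the window. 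I would make this precise by a finite-horizon induction on the value function $V_n$: assuming $V_n$ is nondecreasing in the natural state order (state $0$ dominated by the ladder $\{b\}$ and by the $\K$-states), the Bellman increment $\Delta_n(\mathbf s)=[v_S(1,\mathbf s)-v_S(0,\mathbf s)]+\sum_{\mathbf x}[P(\mathbf x|\mathbf s,1)-P(\mathbf x|\mathbf s,0)]V_n(\mathbf x)$ is nonnegative, so $a_S{=}1$ attains the maximum and the monotonicity of $V_{n+1}$ is preserved; letting $n\to\infty$ and normalizing by $g^\ast=\bar T_S(\mu^{[1]})$ from (\ref{perf3}) shows the always-transmit policy is average-optimal, while strict positivity of $\Delta_n$ at the recurrent states (where $\delta_{sp}+\delta_s>0$) gives uniqueness. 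Equivalently, one may solve the Poisson equation for $\mu^{[1]}$ in closed form—its chain is a positive-recurrent, geometrically decaying ladder—and verify $a_S{=}1$ directly in the average-reward Bellman equation.

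The main obstacle is the value-monotonicity step in (i): unlike a clean birth--death chain, the state space mixes the integer ladder $\{b\}$ with the qualitatively different ``known-packet'' states $\stackrel{\leftrightarrow}{\K},\stackrel{\rightarrow}{\K}$, so the correct partial order under which $V_n$ is monotone—in particular where the $\K$-states sit relative to the ladder—must be guessed and then closed under the induction simultaneously with the sign of $\Delta_n$. A secondary technical point is that $V_n$ and the relative value function grow linearly in $b$ through the $D_p b$ term, so the $n\to\infty$ passage and the Bellman verification on the countable state space need a geometric-tail/uniform-integrability bound rather than the bounded-$h$ argument valid for finite MDPs. This is exactly where chain decoding is essential: the future virtual decoding of buffered packets (the $D_p b$ reward and the upward ladder drift) is what makes transmitting in the ``blind'' states $b$ profitable, in contrast to the no-CD setting of \cite{MichelusiITA11} where always transmitting fails to be optimal.
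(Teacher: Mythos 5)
Your decomposition of the lemma into (i) ``Always-TX uniquely maximizes $\bar T_S$ over $\mathcal D$'' and (ii) ``Always-TX uniquely maximizes $\nabla$ over $\mathcal D$'' is the right one, and your part (ii) is essentially complete: state $0$ is positive recurrent under every policy because $1-\rho_{a_S}\geq 1-\rho_1>0$, and your reachability argument (the ladder via $(\rho_1\upsilon_s)^{b^*}$, the $\K$-states via $\rho_1\upsilon_{sp}$ and $\rho_1(D_p-\upsilon_{sp})$) correctly forces $\pi_\mu(\mathbf s^*)>0$ for the first deviation state, hence $\nabla(\mu)<\nabla_{\max}$; the non-degeneracy caveat you raise is legitimate and implicitly assumed by the lemma. (The paper relegates its own argument to an appendix not reproduced in this version, so a direct comparison of methods is not possible; judging from the proof of Lemma~\ref{mu2}, which computes the performance of $\mu^{[1]}$ and of each one-state perturbation $\mu^{[1]}-\Delta_{\mathbf s}$ in closed form from the stationary distribution, the paper most likely establishes part (i) by the same explicit steady-state computation rather than by dynamic programming.)

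The genuine gap is in part (i), which is the substantive half of the lemma --- the paper itself stresses that the optimality of Always-TX is ``non trivial'' and fails without chain decoding. What you give there is a plan, not a proof, and you say so yourself: the partial order under which $V_n$ is to be monotone is left undetermined, and the induction that must simultaneously close monotonicity of $V_{n+1}$ and nonnegativity of $\Delta_n$ is not carried out. This matters because the signed transition difference is \emph{not} uniformly ``upward'': from state $0$, for instance, $P(\cdot|0,1)-P(\cdot|0,0)$ removes mass $(\rho_1-\rho_0)D_p+\rho_1\upsilon_s$ from state $0$ (not $(\rho_1-\rho_0)$ as you state) and deposits $(\rho_1-\rho_0)D_p-\rho_1\upsilon_{sp}$ on $\stackrel{\rightarrow}{\K}$, a quantity that can be \emph{negative} when $\upsilon_{sp}$ is large relative to $\rho_1-\rho_0$; similarly the change in the self-loop mass at $b>0$ is sign-indefinite. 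So the conclusion cannot follow from stochastic monotonicity alone --- one needs either the correct ordering of $V_n(\stackrel{\leftrightarrow}{\K})$ versus $V_n(\stackrel{\rightarrow}{\K})$ together with a quantitative bound on the value gaps, or (more in the spirit of the rest of Sec.~\ref{sec:mainproof}) an explicit computation of $\bar T_S(\mu)$ and $\nabla(\mu)$ for the competing policies via the stationary distribution. The second unresolved issue you flag, the linear growth $D_pb$ of the reward on a countable state space, likewise needs an actual tail bound before the finite-horizon limit or the average-reward Bellman verification is valid; and the uniqueness claim in (i) is asserted from ``strict positivity of $\Delta_n$ at recurrent states'' without the comparison identity that converts a pointwise advantage into a strict throughput gap. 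Until these steps are supplied, the central assertion of the lemma remains unproved.
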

\begin{proof}
\iftoggle{arxiv}{See \arxiv{Appendix B}.}{Due to space constraints, the proof is provided in \cite[Appendix B]{Proofs}.}
\end{proof}
\vdo{-3mm}
Given $\mu^{[1]}$ we now determine $\mu^{[2]}$ as the solution of the optimization problem (\ref{problemslope}). However, there is no need to minimize over the entire set  $\mathcal D^{[1]}\equiv \mathcal D\setminus\{\mu^{[1]}\}$. In fact, since {\bf OP} has one constraint,
the optimal policy is randomized in at most one state \cite{Altman}. Hence, any point in the segment connecting $(\nabla(\mu^{[1]}),\bar T_S(\mu^{[1]}))$ to $(\nabla(\mu^{[2]}),\bar T_S(\mu^{[2]}))$ is achievable by a policy randomized in at most one state, so that $\mu^{[1]}$ and $\mu^{[2]}$ differ in only one state. Letting $\mathbf s^{[1]}$ be such state, and
\begin{align}
\Delta_{\hat{\mathbf s}}(\mathbf s)=\chi(\mathbf s=\hat{\mathbf s}),\ \forall \mathbf s\in\mathcal S,
\end{align}
where $\chi(\cdot)$ is the indicator function, we can express $\mu^{[2]}$ as
\begin{align}
\label{sfghaf}
\mu^{[2]}=\mu^{[1]}-\Delta_{\mathbf s^{[1]}},
\end{align}
so that $\mu^{[2]}(\mathbf s)=\mu^{[1]}(\mathbf s)=1,\forall \mathbf s\neq \mathbf s^{[1]}$
and $\mu^{[2]}(\mathbf s^{[1]})=0$, hence $\mu^{[2]}$ differs from $\mu^{[1]}$ only in state $\mathbf s^{[1]}$. By leveraging these structural properties into (\ref{problemslope}), we conclude that
\begin{align}
\label{problemslope3}
\mathbf s^{[1]}=\arg\min_{\mathbf s\in\mathcal S}\eta(\mathbf s),
\end{align}
where we have defined the \emph{SU access efficiency} (see also \cite{MichelusiJSAC}) in state $\mathbf s$ as
\begin{align}
\label{etaorig}
\eta(\mathbf s)\triangleq\frac{\bar T_S(\mu^{[1]}-\Delta_{\mathbf s})-\bar T_S(\mu^{[1]})}{\nabla(\mu^{[1]}-\Delta_{\mathbf s})-\nabla(\mu^{[1]})}.
\end{align}
In other words, $\eta(\mathbf s)$ amounts to the \emph{decrease} in SU throughput ($\bar T_S(\mu^{[1]}-\Delta_{\mathbf s})-\bar T_S(\mu^{[1]})$), per unit \emph{decrease} in PU throughput degradation ($\nabla(\mu^{[1]}-\Delta_{\mathbf s})-\nabla(\mu^{[1]})$), as a result of remaining idle in state $\mathbf s$. Since the SU aims at maximizing its own throughput, under a PU throughput degradation constraint, $\mathbf s^{[1]}$ is chosen as the state $\mathbf s$ in (\ref{etaorig}) that minimizes the loss in SU throughput, per unit decrease of the PU throughput degradation, as captured in (\ref{problemslope3}). By solving (\ref{problemslope3}), we obtain the following result.
\begin{lemma}
\label{mu2}
$\mu^{[2]}$ is \emph{uniquely} given by the IC policy
\begin{align}
\label{sdfggfhdsf}
\mu^{[2]}(0)=0,
\qquad
\mu^{[2]}(\mathbf s)=1,\ \forall\mathbf s\in\mathcal S\setminus\{0\}.
\end{align}
\end{lemma}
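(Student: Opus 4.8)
The goal is to solve the minimization (\ref{problemslope3}), i.e.\ to identify the state in which silencing the Always-TX policy $\mu^{[1]}$ (Lemma~\ref{Lemmamu1}) sacrifices the least SU throughput per unit of PU degradation recovered, and to show this state is \emph{uniquely} $\mathbf s=0$. The starting observation is that $\mu^{[1]}-\Delta_{\mathbf s}$ differs from $\mu^{[1]}$ only in state $\mathbf s$, so by the performance-difference identity for average-reward MDPs \cite{Altman,Ross1989} the numerator of $\eta(\mathbf s)$ in (\ref{etaorig}) factors as $\pi_{\mu^{[1]}-\Delta_{\mathbf s}}(\mathbf s)$ times a \emph{local advantage}. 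The denominator admits an even simpler, elementary form: under $\mu^{[1]}$ every slot incurs cost $\nabla_{\max}$, so $\nabla(\mu^{[1]}-\Delta_{\mathbf s})=\nabla_{\max}[1-\pi_{\mu^{[1]}-\Delta_{\mathbf s}}(\mathbf s)]$ and the denominator equals $-\nabla_{\max}\,\pi_{\mu^{[1]}-\Delta_{\mathbf s}}(\mathbf s)$. The steady-state factor therefore cancels in the ratio, and (\ref{problemslope3}) reduces to maximizing over $\mathbf s$ the SU-throughput advantage of idling,
\begin{align}
\label{advdef}
A(\mathbf s)\triangleq [v_S(0,\mathbf s)-v_S(1,\mathbf s)]+\sum_{\mathbf x}[P(\mathbf x|\mathbf s,0)-P(\mathbf x|\mathbf s,1)]\,h(\mathbf x),
\end{align}
where $h$ is the relative-value (bias) function of the chain induced by $\mu^{[1]}$; since $A(\mathbf s)<0$, the largest $A(\mathbf s)$ corresponds to the smallest $\eta(\mathbf s)$.

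Next I would obtain $h$ by solving Poisson's equation $v_S(1,\mathbf s)-g+\sum_{\mathbf x}P(\mathbf x|\mathbf s,1)h(\mathbf x)=h(\mathbf s)$ with $g=\bar T_S(\mu^{[1]})$, using the transition law (\ref{tp1})--(\ref{tp4}) at $a_S=1$. The states $\stackrel{\rightarrow}{\K}$ and $\stackrel{\leftrightarrow}{\K}$ give scalar equations yielding $h(\stackrel{\rightarrow}{\K})-h(0)$ and $h(\stackrel{\leftrightarrow}{\K})-h(0)$ in closed form; here the geometric factors $1-\rho_1$, $1-\rho_1(1-D_p+\upsilon_{sp})$ and $1-\rho_1(1-D_p)$ of (\ref{zeta}) arise as the sojourn/escape probabilities of these states and of the $b$-ladder. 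Along the ladder the reward $v_S(1,b)=R_s[(\delta_{sp}+\delta_s)+D_pb]$ is linear in $b$, so $h(b)$ solves a recurrence with linear forcing; I would take the quadratic particular solution plus the homogeneous part, fixing the admissible branch by the uniform geometric ergodicity of the chain (every state returns to $0$ with probability at least $1-\rho_1$, so $h$ is finite and the accumulated-reward series converge despite the unbounded reward).

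Finally I would substitute $h$ into (\ref{advdef}) for the four state types. The immediate drops $v_S(0,\cdot)-v_S(1,\cdot)$ equal $-R_s(\delta_{sp}+\delta_s)$ for every $b\geq 0$, $-R_s(D_s-\upsilon_{sp})$ for $\stackrel{\leftrightarrow}{\K}$, and $-R_sD_s$ for $\stackrel{\rightarrow}{\K}$, while the transition-difference terms are supported on a handful of neighboring states. The crux is to prove $A(0)>A(\mathbf s)$ \emph{strictly} for all $\mathbf s\neq 0$. I expect $A(b)$ to be monotonically decreasing in $b$ (idling deeper in the ladder forgoes more virtually-decodable packets, and the coefficient of the growing $h(b)$ in $A(b)$ is negative), so the binding comparison on the ladder is $A(0)$ versus $A(1)$; combined with the two explicit inequalities $A(0)>A(\stackrel{\leftrightarrow}{\K})$ and $A(0)>A(\stackrel{\rightarrow}{\K})$ this yields $\mathbf s^{[1]}=0$. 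The ordering is the expected one: $\stackrel{\rightarrow}{\K}$ is interference-free and its transmissions do not perturb the future chain, so it carries the largest loss-to-idle (largest $\eta$) and is silenced last, whereas state $0$ has the smallest immediate reward $R_s(\delta_{sp}+\delta_s)$ (PU interference present) and only \emph{seeds} rather than completes the CD graph, giving the smallest $\eta$; this is precisely consistent with Theorems~\ref{thm:policy1}--\ref{thm:policy2}, which reduce $\mu^*(0)$ before $\mu^*(\stackrel{\rightarrow}{\K})$ as $\nabla_{th}$ decreases. The main obstacle is the ladder analysis: solving Poisson's equation over the countably-infinite $\{b\geq 0\}$ with unbounded linear reward, and then establishing $A(0)>A(b)$ \emph{uniformly} in $b$; strictness—hence uniqueness of $\mu^{[2]}$—rests on the strict positivity of the decoding probabilities and on $\rho_1>\rho_0$.
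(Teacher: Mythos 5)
Your route is genuinely different from the paper's and, in outline, viable. The paper computes $\bar T_S$ and $\nabla$ in closed form for $\mu^{[1]}$ and for every perturbed policy $\mu^{[1]}-\Delta_{\mathbf s}$ (via the steady-state distributions in its Appendix A), forms the ratio (\ref{etaorig}) to obtain the explicit expressions (\ref{etamum})--(\ref{etaK}), and reads off that $\eta$ is strictly minimized at $\mathbf s=0$. You instead invoke the performance-difference identity so that the factor $\pi_{\mu^{[1]}-\Delta_{\mathbf s}}(\mathbf s)$ cancels between numerator and denominator, reducing (\ref{problemslope3}) to comparing local advantages $A(\mathbf s)$ computed from a single Poisson equation for $\mu^{[1]}$, i.e.\ $\eta(\mathbf s)=-A(\mathbf s)/\nabla_{\max}$. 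This is a more economical reduction (one bias computation instead of one steady-state computation per perturbed policy); your identification of the denominator as $-\nabla_{\max}\,\pi_{\mu^{[1]}-\Delta_{\mathbf s}}(\mathbf s)$ and of the immediate reward drops in each state type is correct, and one can check that substituting the linear bias on the $b$-ladder into your $A(\cdot)$ reproduces (\ref{etamum})--(\ref{etaK}) exactly.

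Two issues remain. First, the decisive content of the lemma --- the strict inequalities $A(0)>A(b)$ for all $b>0$, $A(0)>A(\stackrel{\leftrightarrow}{\K})$ and $A(0)>A(\stackrel{\rightarrow}{\K})$ --- is only ``expected,'' never derived: you do not solve the Poisson equation nor exhibit the sign of any difference. Since these comparisons are the entire point (the paper devotes (\ref{etamum})--(\ref{etaK}) to precisely this), the plan as written has a real gap; it happens to be fillable, because the closed forms confirm your intuition ($\eta(m)-\eta(0)$ is affine in $m$ with slope proportional to $(\rho_1-\rho_0)D_p(1-D_p)>0$, and the other two gaps are manifestly positive sums). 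Second, a technical slip in the ladder analysis: from (\ref{tp1}) at $a_S=1$ the Poisson recurrence reads $[1-\rho_1(1-D_p-\upsilon_s)]h(b)=\rho_1\upsilon_s\,h(b+1)+R_sD_p\,b+\mathrm{const}$, and the two coefficients on $h(b)$ and $h(b+1)$ differ by $1-\rho_1(1-D_p)>0$, so the admissible particular solution is \emph{linear} in $b$ (with slope $R_sD_p/[1-\rho_1(1-D_p)]$), not quadratic; a quadratic ansatz would be needed only if those coefficients coincided. This is consistent with $\eta(m)$ being affine in $m$ in (\ref{etamum}). Your criterion for discarding the geometrically growing homogeneous branch (integrability against the geometrically decaying stationary law) is the right one, and your ergodicity remark correctly handles the unbounded reward on the countable ladder.
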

\begin{proof}
In this proof, we evaluate $\eta(\mathbf s)$ in all states $\mathbf s\in\mathcal S$,
and show that it is minimized by $\mathbf s=0$. We will make use of \arxiv{Appendix A} to compute the performance of $\mu^{[1]}$ and $\mu^{[1]}-\Delta_{\mathbf s},\forall\mathbf s \in\mathcal S$ in closed form, used to compute $\eta(\mathbf s)$ in (\ref{etaorig}).
\iftoggle{arxiv}{}{Due to space constraints, the algebraic steps are provided in \cite{Proofs}.}
 We obtain
\begin{align}
\label{etamum}
&\eta(m)
=
\nonumber
\frac{R_s}{\nabla_{\max}}
\Biggr[
D_s
  - \frac{\zeta(1-\rho_1)(1-\rho_1+\rho_0D_p)}{1-\rho_1(1-D_p)}
  \\&\qquad\qquad
+m\frac{(\rho_1-\rho_0)D_p(1-D_p)}{1-\rho_1(1-D_p)}
\Biggr],
\\
\label{etaKLR}
&\eta(\stackrel{\leftrightarrow}{\K})
=\nonumber
\eta(0)
+\frac{R_s}{\nabla_{\max}}\Biggr[\frac{(\rho_1-\rho_0)(1-D_p)D_p}{1-\rho_1(1-D_p)}
\\&\qquad\qquad
+\frac{[1-\rho_0(1-D_p)](1-\rho_1)\upsilon_s}{[1-\rho_1(1-D_p)]^2}\Biggr],
\\
\label{etaK}
&\eta(\stackrel{\rightarrow}{\K})=
\eta(0)
+\frac{R_s}{\nabla_{\max}}(1-\rho_1)\frac{1-\rho_0(1-D_p)}{1-\rho_1(1-D_p)}\zeta,
\end{align}
where $\eta(0)$ is given by (\ref{etamum}) with $m{=}0$, and $\zeta$ is given by (\ref{zeta}).

To conclude, by comparing the SU access efficiencies, it is clear that $\eta(m){>}\eta(0),\forall m{>}0$,
$\eta(\stackrel{\rightarrow}{\K}){>}\eta(0)$ and $\eta(\stackrel{\rightarrow}{\K}){>}\eta(0)$, so that 
 the solution of (\ref{problemslope}) yields $\mathbf s^{[1]}=0$ and $\mu^{[2]}\equiv\mu^{[1]}-\Delta_{0}$, proving the lemma.
\end{proof}

Under policy $\mu{=}\mu^{[2]}$, we have that $\mu(0){=}0$ and 
$\mu(\stackrel{\rightarrow}{\K}){=}1$ hence, following the discussion in
Sec.~\ref{struct}, the SU accesses only states $0$ (where it remains idle) and $\stackrel{\rightarrow}{\K}$ (where it transmits). Therefore, it transmits with probability one in state $\stackrel{\rightarrow}{\K}$ only, \emph{i.e.}, after the PU packet becomes known at SUrx; when this happens, the SU packet is decoded via interference cancellation. For this reason, $\mu^{[2]}$ is termed "IC policy." As discussed in Sec.~\ref{struct}, under such policy we obtain $\nabla(\mu^{[2]}){=}\nabla_{GA}$ as in (\ref{NAGA}). Thus, if $\nabla_{th}{\in}[\nabla(\mu^{[2]}){=}\nabla_{GA},\nabla(\mu^{[1]}){=}\nabla_{\max}]$, the optimal policy is obtained by randomizing between the IC policy $\mu^{[2]}$ and the Always-TX policy $\mu^{[1]}$, or equivalently, by the policy (\ref{pol2}) randomized in state $0$. The optimal policy given in (\ref{pol2}) and its performance in (\ref{perf2}) are obtained by enforcing $\nabla(\mu^*){=}\nabla_{th}$ to determine the optimal value of $\mu^*(0)$. On the other hand, if $\nabla_{th}{\geq}\nabla_{\max}$, then the optimal policy is Always-TX ($\mu^{[1]}$), which maximizes the SU throughput and satisfies the constraint $\nabla(\mu^{[1]}){=}\nabla_{\max}{\leq}\nabla_{th}$. Thus, we have proved the structure and performance of the optimal policy for the case $\nabla_{th}{\geq}\nabla_{GA}$ as well.
\vone{-5mm}
\vdo{-3mm}
\section{Online Learning and Adaptation}
\label{learning}
The Always-TX, IC and Idle modes do not require any knowledge of the statistics of the model, such as the decoding probabilities (\ref{outcomes}) or the PU outage probabilities $\rho_0$ and $\rho_1$. Thus, the SU needs only to learn the optimal randomization among these three modes of operation. This can be inferred from the throughput degradation experienced by the PU, estimated by monitoring the ACK/NACK feedback: when this estimate is below the PU throughput degradation constraint, the SU may transmit more often by favoring the Always-TX or IC mode (depending on the time-sharing currently in use); when this estimate is above the constraint, the SU may reduce its transmissions by favoring the IC or Idle modes. This feature of the optimal policy facilitates learning and adaptation in practical settings where the statistics of the system are unknown, or vary over time.

In this section, we propose an algorithm based on \emph{stochastic gradient descent} (SGD) \cite[Chapter 14]{Shalev-Shwartz} for the online optimization of the SU access policy and transmit rate $R_s$, by leveraging the structure of the optimal access policy. Note that $\mu^*$ in Theorems~\ref{thm:policy1} and~\ref{thm:policy2} is uniquely characterized by a parameter $\nu{\triangleq}\mu^*(0){+}\mu^*(\stackrel{\rightarrow}{\K})\in[0,2]$, related to the \emph{access level} of the SU: given $\nu$, we get $\mu^*$ as $\mu^*(0){=}\max\{\nu-1,0\}$ and $\mu^*(\stackrel{\rightarrow}{\K}){=}\min\{\nu,1\}$, and the degradation to the PU, $\nabla_{th}$, is related to $\nu$ via (\ref{pol1}) for $0{\leq}\nu{\leq}1$ and (\ref{pol2}) for $1{<}\nu{\leq}2$. Let $\bar T_S(\nu)$ and $\bar T_P(\nu)$ be the corresponding SU and PU throughputs, which are increasing and decreasing functions of $\nu$, respectively. Let $\bar T_{P,\min}$ be the minimum throughput requirement for the PU. This information may be broadcast by the PU system to regulate the access of SUs. The PU throughput degradation constraint $\nabla_{th}$ in (\ref{optP1}) is related to $\bar T_{P,\min}$ via $\nabla_{th}=1-\bar T_{P,\min}/\bar T_{P,\max}$. The rate $R_s$ is chosen so as to maximize the SU throughput under no interference from the PU signal, \emph{i.e.}, $R_s=\arg\max_{r_s} r_s\mathbb P(r_s{<}C\left(\gamma_s\right))$. Under Rayleigh fading, we obtain $\mathbb P(r_s{<}C\left(\gamma_s\right)){=}\exp\{-(2^{r_s}-1)/\bar\gamma_s\}$.
Thus, the optimal $\nu^*{\in}[0,2]$ (or equivalently, the optimal policy $\mu^*$) and $R_s^*{\geq}0$ can be expressed as the minimizers of
\begin{align}
\label{optnuRs}
\min_{\nu,r_s} \frac{1}{2}(\bar T_P(\nu)-\bar T_{P,\min})^2-
  r_s\exp\left\{-\frac{2^{r_s}-1}{\bar\gamma_s}\right\}.
\end{align}
 We denote the objective function as $G(\nu,r_s)$. Consider the optimization with respect to $\nu$. Since $\bar T_P(\nu)$ is a decreasing function of $\nu$, if $\bar T_{P,\min}>\bar T_{P,\max}$ (hence, $\bar T_P(\nu)<\bar T_{P,\min}$), then the solution is $\nu=0$ (the SU remains idle, and the optimization of $R_s$ is irrelevant); indeed, in this case, the PU has set an unrealistic demand, hence the SU should remain idle to at least partially satisfy it. If $R_p(1{-}\rho_{1})\leq\bar T_{P,\min}\leq \bar T_{P,\max}$, where $R_p(1{-}\rho_{1})$ is the PU throughput achieved when the SU always transmits, then the solution is the unique $\nu^*$ such that $\bar T_P(\nu^*)=\bar T_{P,\min}$, \emph{i.e.}, the PU throughput constraint
is attained with equality. Finally, if $\bar T_{P,\min}<R_p(1{-}\rho_{1})$, then $\bar T_{P,\min}<\bar T_P(\nu),\forall \nu$, hence the solution is $\nu=2$ (the SU always transmits); indeed, in this case, the PU demand can be met even if the SU always transmits. 

Problem (\ref{optnuRs}) can be solved using the \emph{gradient projection algorithm} \cite[Chapter 3]{Bertsekas_parallel}. The gradient of the objective function $G(\nu,r_s)$ with respect to $\nu$ and $r_s$ is given by
\begin{align}
\nonumber
&\frac{\mathrm d G(\nu,r_s)}{\mathrm d\nu}
=
\frac{\mathrm d \bar T_P(\nu)}{\mathrm d\nu}
(\bar T_P(\nu)-\bar T_{P,\min})
\\&
\propto\bar T_{P,\min}-\bar T_P(\nu)
\triangleq g_1(\nu),
\\
\nonumber
&\frac{\mathrm d G(\nu,r_s)}{\mathrm d r_s}
=
\exp\left\{-\frac{2^{r_s}-1}{\bar\gamma_s}\right\}
\left[
\frac{1}{\bar\gamma_s}\ln(2)r_s2^{r_s}-1
\right]
\\&
\propto
\mathbb E[a_{S,t}]\left[\ln(2)r_s2^{r_s}-\bar\gamma_s\right]\triangleq g_2(r_s),
\end{align}
where $\propto$ denotes proportionality up to a positive multiplicative factor, since $\mathrm d\bar T_P(\nu)/\mathrm d\nu{<}0$. Thus, (\ref{optnuRs}) can be solved as
\begin{align}
&\nu_{t+1}=\left[\nu_t-\beta_t g_1(\nu_t)\right]_0^2,
\\
&R_{s,t+1}=\left[R_{s,t}-\beta_t g_2(R_{s,t})\right]^+,
\end{align}
where $\beta_t>0$ is the step-size, $[\cdot]_0^2=\min\{\max\{\cdot,0\},2\}$ and $[\cdot]^+=\max\{\cdot,0\}$ are projection operations onto the feasible sets. The policy used at time $t$ is then given by
\begin{align}
\label{polti}
\left\{\begin{array}{l}
\mu_t(0)=\max\{\nu_t-1,0\},\\
\mu_t(\stackrel{\rightarrow}{\K})=\min\{\nu_t,1\},\\
\mu_t(\stackrel{\leftrightarrow}{\K})=\mu_t(b)=1,\ \forall b>0.
\end{array}\right.
\end{align}
However, typically $\bar T_P(\nu)$ may not be available to the SU to compute the gradient $g_1(\nu)$, but only observations of the ACK/NACK feedback sequence $\{y_{P,t},t\geq 0\}$;
similarly, only realizations of the channel fading $\gamma_{s,t}$ may be available via channel estimation, instead of the expected channel gain $\bar\gamma_s=\mathbb E[\gamma_{s,t}]$ required to compute $g_2(r_s)$. Thus, we use the SGD algorithm,
which replaces  $g_1(\nu_t)$ and $g_2(r_{s,t})$ with estimates $\hat g_{1,t}$ and $\hat g_{2,t}$ such that $\mathbb E[\hat g_{1,t}|\nu_t]=g_1(\nu_t)$ and $\mathbb E[\hat g_{2,t}|\nu_t]=g_2(R_{s,t})$. In particular, we choose
\begin{align}
&\hat g_{1,t}=\bar T_{P,\min}-R_p\chi(y_{P,t}=\text{ACK}),\\
&\hat g_{2,t}=a_{S,t}\left[\ln(2)R_{s,t}2^{R_{s,t}}-\gamma_{s,t}\right].
\end{align}
We finally obtain
\begin{align}
\label{SGDalgo}
&\nu_{t+1}=\left[\nu_t+\beta_t(R_p\chi(y_{P,t}=\text{ACK})-\bar T_{P,\min})\right]_0^2,
\\
&R_{s,t+1}=\left[R_{s,t}+\beta_t a_{S,t}\left[\gamma_{s,t}-\ln(2)R_{s,t}2^{R_{s,t}}\right]\right]^+,
\label{Rst}
\end{align}
where $\nu_0=0$ and $R_{s,0}=0$ (the SU is idle at initialization). Thus, $\nu$ tends to augment if an ACK is received, so that the SU may transmit more often, and to diminish otherwise; $R_s$ tends to augment if the channel is good ($\gamma_{s,t}>\ln(2)R_{s,t}2^{R_{s,t}}$), and diminish otherwise. In static scenarios where the parameters of the model do not change, a decreasing step-size is commonly used in stochastic optimization, such as
$\beta_t=\beta_0/(t+1)$; in time-varying scenarios, a fixed but small step-size may be used, in order to accommodate adaptation. Note that, if $a_{S,t}=0$, the SU remains idle and the channel fading may not be estimated, yielding $R_{s,t+1}=R_{s,t}$ as in (\ref{Rst}).
\vone{-3mm}
\vdo{-4mm}
\section{Numerical Results}
\label{sec:numres}
In this section, we present numerical results. PUtx is located at position $(0,0)$, PUrx at $(0,d_0)$, at reference distance $d_0$ from PUtx. SUtx and SUrx are located at positions
$(d_{SP},0)$ and $(d_{SP},d_0)$, respectively, where $d_{SP}$ is the distance between the SU and PU pairs. We assume Rayleigh fading channels. The expected SNR of the link PUtx-PUrx is $\bar\gamma_p{=}20$. For any other link, the expected SNR is given by $\bar\gamma_{TR}{=}\bar\gamma_p\left(d_{TR}/d_0\right)^{-\alpha}$, where $d_{TR}$ is the distance between the corresponding transmitter (T) and receiver (R), and $\alpha{=}2$ is the pathloss exponent. $R_p$ and $R_s$  are chosen so as to maximize the respective PU and SU throughputs under no interference, \emph{i.e.}, $R_x=\arg\max_{r_x}r_x\mathbb P(r_x{<}C\left(\gamma_x\right)),x\in\{p,s\}$. The outage probabilities for the PU are computed as $\rho_0=\mathbb P(R_s{<}C(\gamma_{p}))$ and $\rho_1{=}\mathbb P(R_s{<}C(\gamma_{p}/(1{+}\gamma_{sp})))$.

\begin{figure}[t]
\centering  
\includegraphics[width=\w{.75}{.5}\linewidth,trim = 5mm 0mm 18mm 10mm,clip=false]{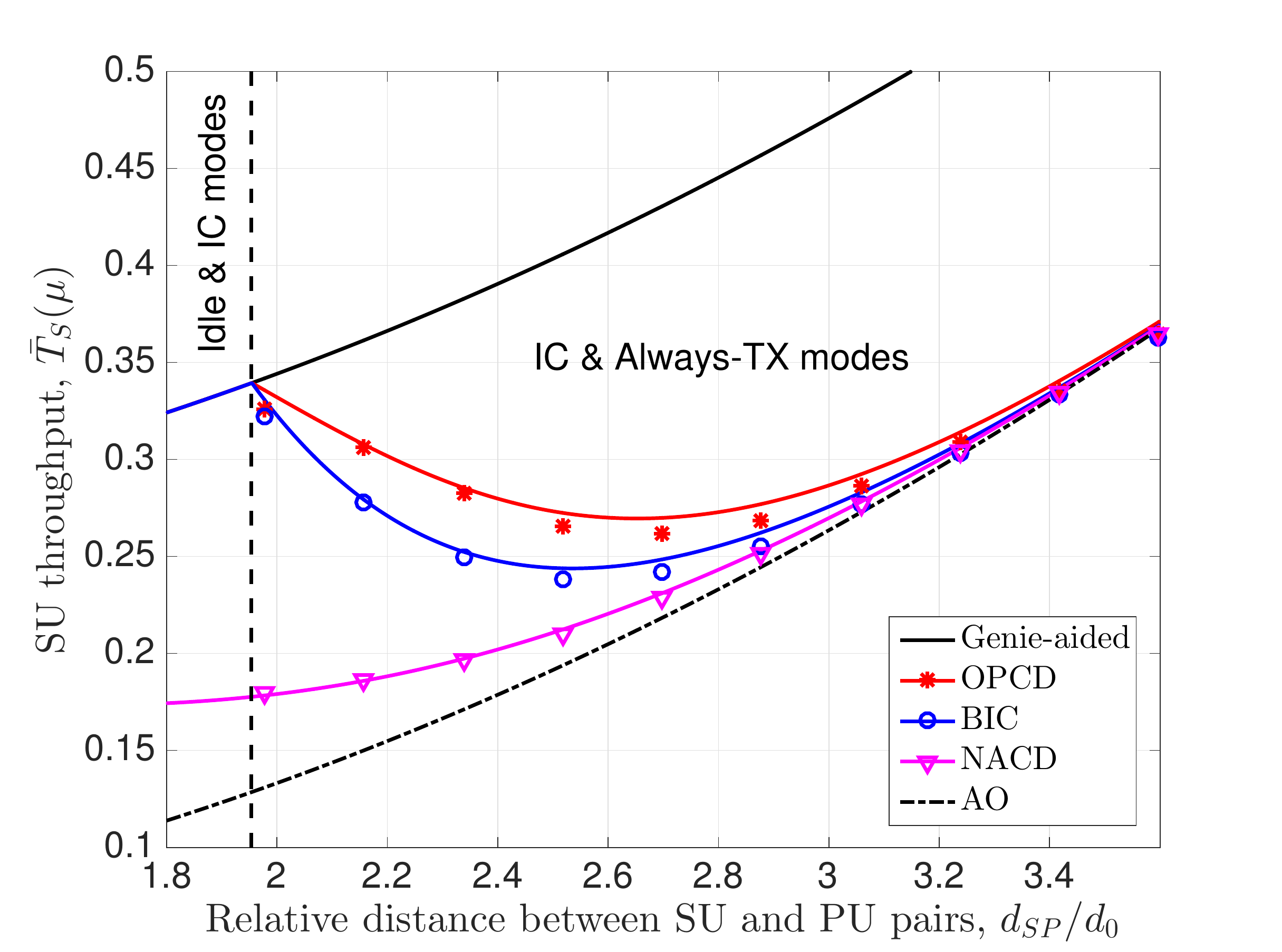}
\caption{SU throughput versus relative distance between SU and PU pairs, $d_{SP}/d_0$;
$\nabla_{th}=0.1$.
Solid lines: analytical expression; Markers: SGD algorithm via Monte Carlo simulation.}
\label{results}
\vone{-7mm}
\vdo{-5mm}
\end{figure}

We consider the following schemes: 1) The optimal CD policy (OPCD) given by Theorems~\ref{thm:policy1} and~\ref{thm:policy2}. 2) The BIC scheme developed in  \cite{MichelusiJSAC}, where, unlike CD, the SU does not perform retransmissions; hence,
after decoding the PU packet, it uses interference cancellation only within the current ARQ window (see example in Fig.~\ref{figexlabel2}). 3) The non-adaptive CD scheme (NACD), where the SU adopts the optimal packet selection policy without any access policy, \emph{i.e.}, it transmits with constant probability $\min\{\nabla_{th}/\nabla_{\max},1\}$ in all slots, independently of the state. 4) The "ARQ-oblivious" (AO) scheme, originally proposed in \cite{IT_ARQ}, where the SU attempts to jointly decode the SU and PU packets and remove the interference of the latter, by leveraging the PU codebook structure \cite{Taranto}; however, it does not exploit the redundancy  of the ARQ mechanism to perform interference cancellation over the ARQ window; its performance is given by $\bar T_{AO}=\min\{\nabla_{th}/\nabla_{\max},1\}R_s(\delta_s+\delta_{sp})$. Additionally, we plot the genie-aided throughput $\bar T_S^{(GA)}(\min\{\nabla_{th}/\nabla_{\max},1\})$, which assumes a-priori knowledge of the PU packet.

In Fig.~\ref{results}, we plot the SU throughput as a function of $d_{SP}/d_0$. The interference constraint to the PU is set to $\nabla_{th}=0.1$, so that the SU is allowed to degrade the PU throughput by at most $10\%$. The solid lines refer to the analytical expressions, whereas the markers refer to the SGD algorithm developed in Sec.~\ref{learning}, evaluated via Monte Carlo simulation over  $10^5$ slots. The SGD algorithm
is initialized with $\nu_{0}=0$ and $R_{s,0}=0$, so that the SU is initially idle. This is a conservative behavior, which minimizes the risk of generating harmful interference to the PU in the initial phase when the SU is uninformed, and allows the latter to collect observations before undertaking a more active behavior. We note that SGD closely approaches the analytical curves.

Note that AO lower bounds the performance of CD and BIC, since it does not leverage the interference of the ARQ protocol. In contrast, by assuming non-causal knowledge of the PU packet, "genie-aided" upper bounds the performance. Both AO and "genie-aided" exhibit a monotonically increasing trend as a function of $d_{SP}$: as the SU and PU pairs move farther away from each other, the interference decreases, hence the SU can afford to transmit more frequently. Similarly, the SU throughputs of OPCD and BIC monotonically increase and attain the genie-aided throughput for $d_{SP}\lessapprox 2d_0$. In fact, in this regime, SUtx is close to PUrx, hence its transmissions interfere strongly with PUrx; in this case, the SU may only transmit sparingly by randomizing between Idle and IC modes (policy (\ref{pol1})), which attains the genie-aided upper bound, see (\ref{GAth}). As $d_{SP}$ increases beyond $2d_0$, the interference to the PU becomes weaker, the SU may transmit more frequently, hence it starts transmitting in state $0$ as well (policy (\ref{pol2})), creating a gap with respect to the genie-aided throughput (case $\nabla_{GA}{<}\nabla_{th}{<}\nabla_{\max}$ in Theorem~\ref{thm:policy2}).

\begin{figure}[t]
\centering  
\includegraphics[width=\w{.75}{.5}\linewidth,trim = 5mm 0mm 18mm 10mm,clip=false]{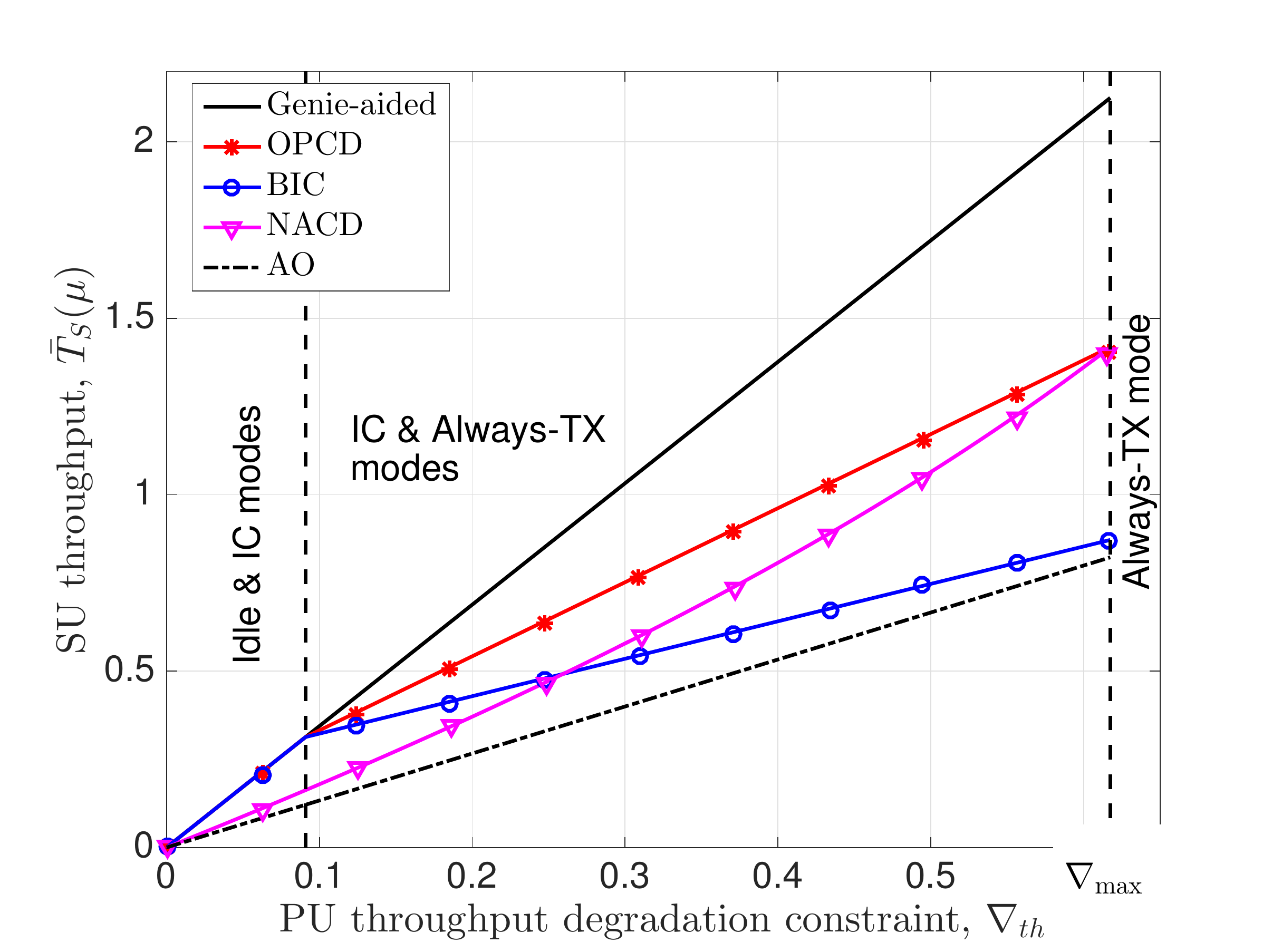}
\caption{SU throughput versus $\nabla_{th}$; $d_{SP}=2 d_0$.
Solid lines: analytical expression; Markers: SGD via Monte Carlo simulation.}
\label{results2}
\vone{-7mm}
\vdo{-5mm}
\end{figure}

However, surprisingly, OPCD and BIC do not follow the same monotonic trend as AO and "genie-aided" for $d_{SP}\gtrapprox 2d_0$: the performance of OPCD decreases for $2d_0\lessapprox d_{SP}\lessapprox 2.7d_0$, and approaches that of AO for $d_{SP}{\to}3d_0$ (a similar consideration holds for BIC in the range $2d_0\lessapprox d_{SP}\lessapprox 2.5d_0$). This counterintuitive result can be explained as follows:
\begin{enumerate}[leftmargin=0.45cm]
\item As  SUrx moves farther away from PUtx, SUrx receives a weaker PU signal, hence it becomes more difficult to decode the PU packet and remove its interference, and, in turn, to decode the CD root and initiate CD;
\item  As  SUtx moves farther away from PUrx, the interference generated by SUtx to PUrx becomes weaker, hence the outage probability $\rho_1$ tends to decrease; therefore, when using the Always-TX mode, the ARQ window tends to shorten (its average duration is $1/(1-\rho_1)$ in the  Always-TX mode) since the PU is more likely to succeed, resulting in fewer opportunities to leverage the redundancy of the ARQ protocol.
\end{enumerate}
Overall, we note that OPCD outperforms BIC by up to $\sim 15\%$ and achieves up to $2\times$ throughput improvement over NACD and up to $3\times$ over AO. NACD performs poorly compared to both OPCD and BIC, revealing the importance of using an optimized SU access policy over a non-adaptive one which does not fully leverage the structure of the problem.

\begin{figure}[t]
\centering  
\includegraphics[width=\w{.75}{.5}\linewidth,trim = 5mm 0mm 18mm 10mm,clip=false]{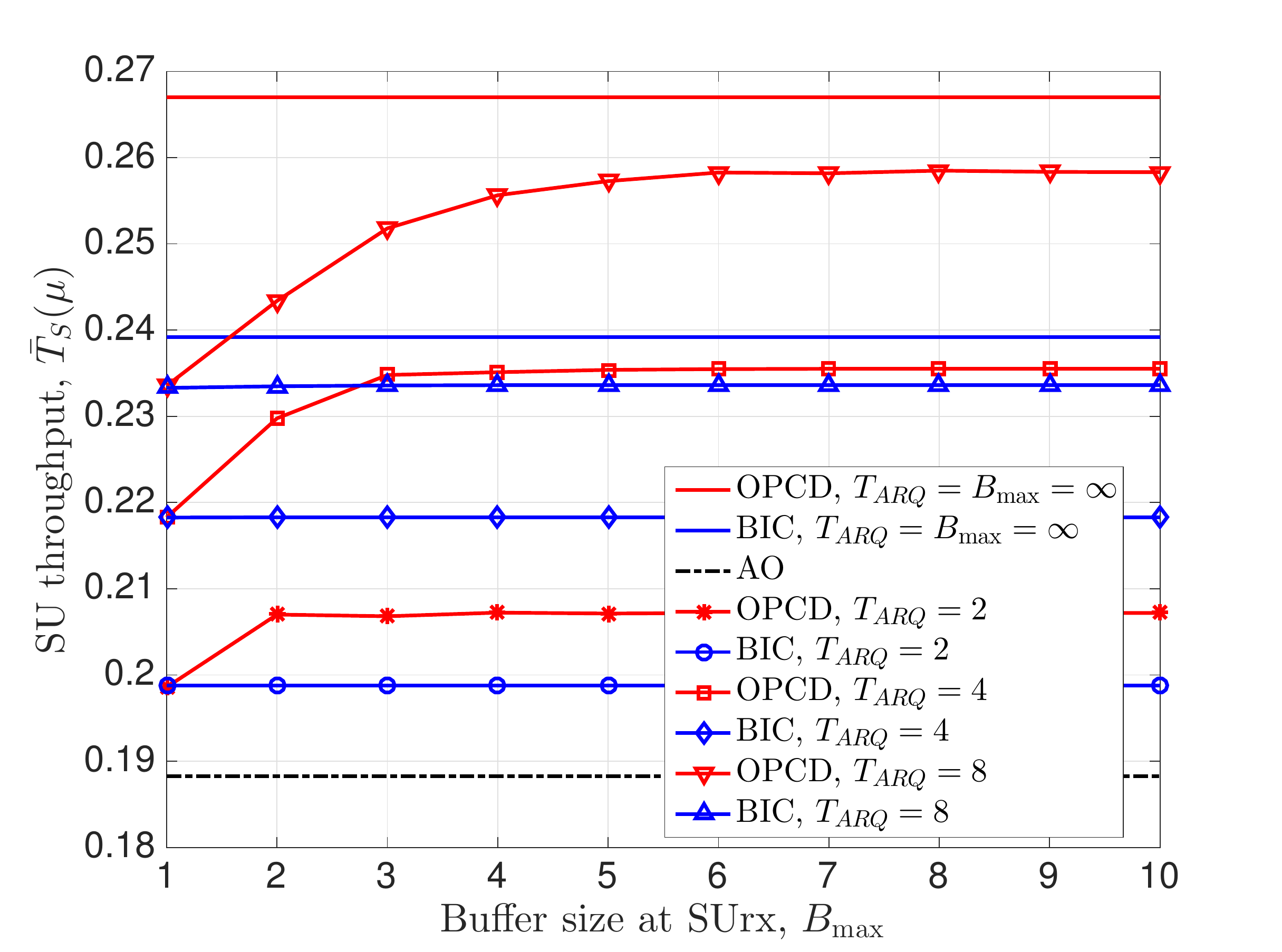}
\caption{SU throughput versus SUrx buffer size $B_{\max}$ and PU ARQ deadline $T_{ARQ}$; $d_{SP}=2.5 d_0$, $\nabla_{th}=0.1$.}
\label{results3}
\vone{-6mm}
\vdo{-2mm}
\end{figure}

\begin{figure}[t]
\centering  
\includegraphics[width=\w{.75}{.5}\linewidth,trim = 5mm 0mm 18mm 10mm,clip=false]{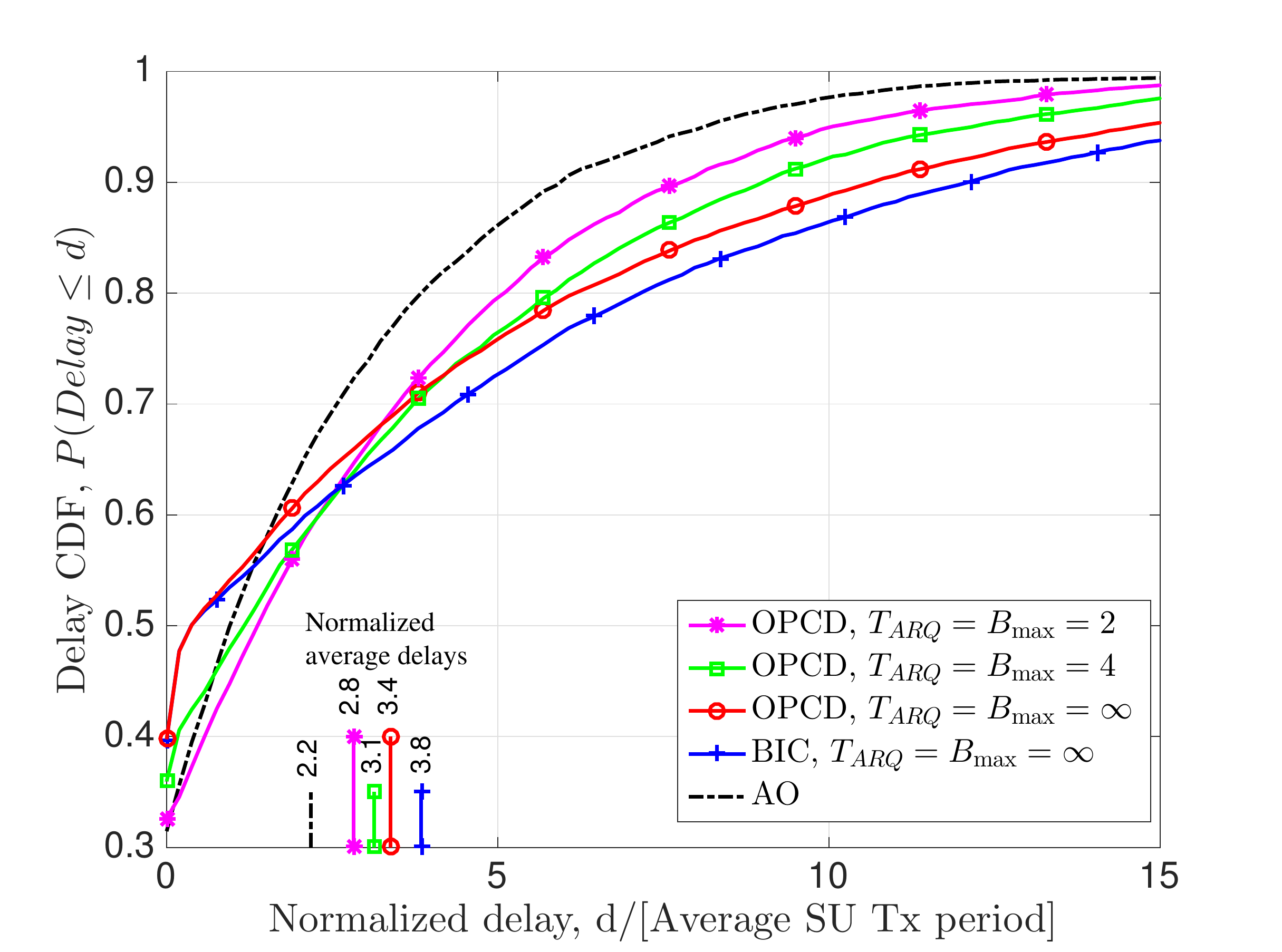}
\caption{Cumulative distribution function (CDF) of the SU delay,
 normalized to the average transmission period of the SU ($\simeq 5.3$ [slots] in this case);
$d_{SP}=2.5 d_0$, $\nabla_{th}=0.1$.}
\label{results4}
\vone{-3mm}
\vdo{-6mm}
\end{figure}

In Fig.~\ref{results2}, we plot the trade-off between the SU throughput $\bar T_S(\mu)$ and the PU throughput degradation $\nabla(\mu)$ as $\nabla_{th}$ is varied. For $\nabla_{th}\leq\nabla_{GA}\simeq 0.1$, OPCD randomizes between the Idle and IC modes; for $\nabla_{GA}\leq\nabla_{th}\leq\nabla_{\max}\simeq 0.6$, it randomizes between the IC and Always-TX modes; for $\nabla_{th}\geq\nabla_{\max}\simeq 0.6$ the interference constraint becomes inactive and Always-TX is the only mode of operation. In all cases, $\bar T_S(\mu)$ monotonically increases with $\nabla_{th}$, since more opportunities become available to the SU to use the channel. OPCD achieves up to 40\%-60\% improvement over BIC, for $\nabla_{th}\gtrapprox 0.3$, up to $2\times$ improvement over NACD for $\nabla_{th}\lessapprox 0.1$, and $\geq 70\%$ improvement over AO, for all range of values.

\begin{figure}[t]
\centering  
\includegraphics[width=\w{.75}{.5}\linewidth,trim = 5mm 0mm 18mm 10mm,clip=false]{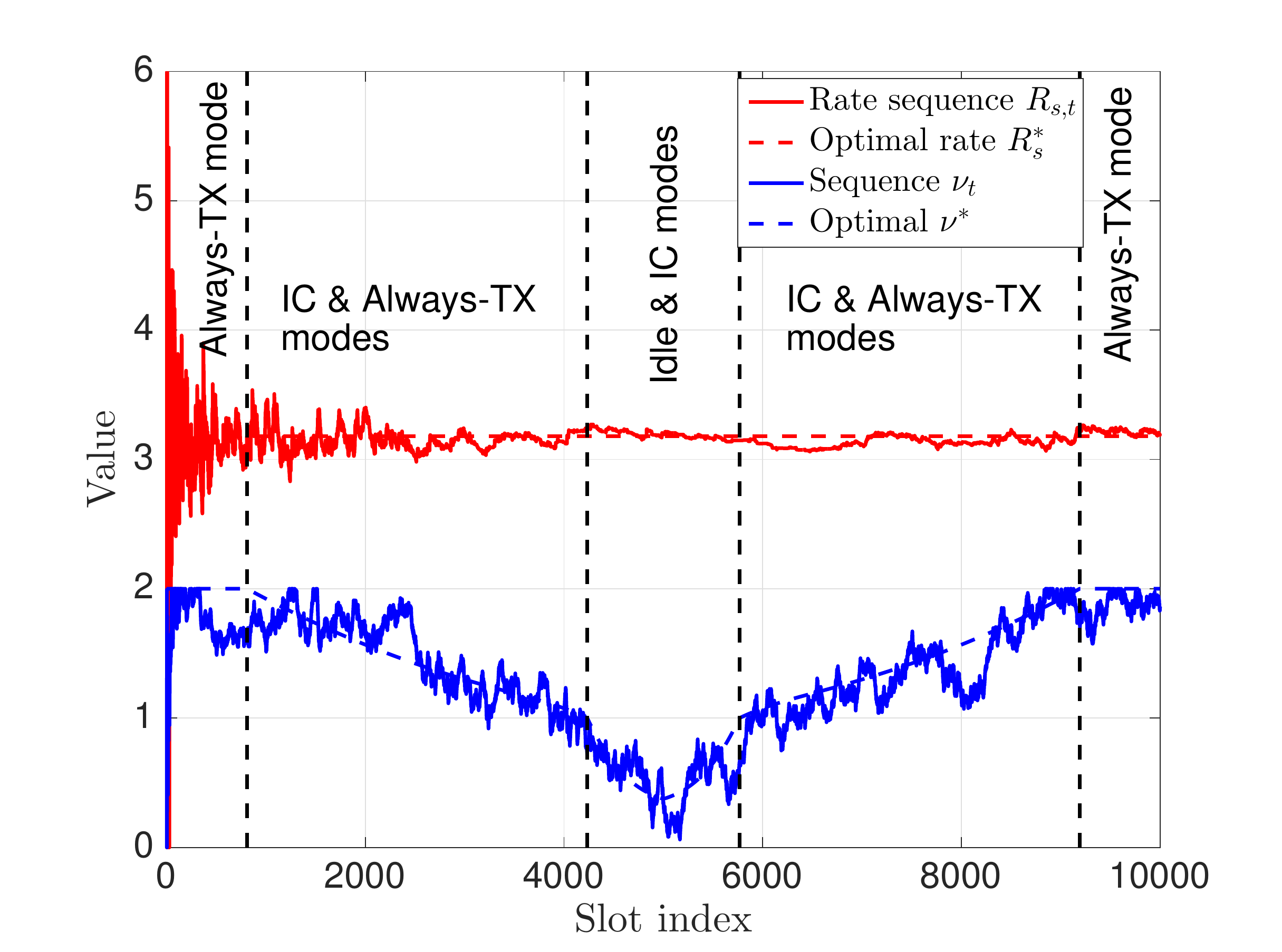}
\caption{Convergence of the SGD algorithm; $\nabla_{th}{=}0.1$;
the distance varies linearly from $d_{SP}{=}10$ at time $0$ to $d_{SP}{=}0.5$ at time $t{=}5000$, and then linearly to $d_{SP}{=}10$ at time $t{=}10^4$.}
\label{results5}
\vone{-9mm}
\vdo{-6mm}
\end{figure}

In the analysis, it was assumed that the PU packet is transmitted until successfully decoded at its intended receiver and an infinite buffer to store the received signals at SUrx. In practical systems, these are finite quantities. In Fig.~\ref{results3}, we evaluate the effect of a finite
ARQ deadline $T_{ARQ}$ and a finite buffer size $B_{\max}$. Each unit corresponds to 
the buffer space required to store one received signal. When the buffer at SUrx is full, the received signal is discarded after the decoding attempt, resulting in missed opportunities to build up the CD graph. Note that, as $T_{ARQ}$ increases, the performance improves. In fact, the longer the ARQ window, the more opportunities  available at the SU to leverage the redundancy of the ARQ process of the PU. Surprisingly, most of the benefits of CD are reaped with a buffer size of only $B_{\max}{\simeq}4$ units, with larger buffer sizes yielding only marginal improvements. Generally, OPCD outperforms BIC, demonstrating a better use of the buffer space available at SUrx.

The CD protocol may introduce delay at the SU, due to the buffering mechanism at SUrx. In Fig.~\ref{results4}, we investigate the delay CDF. We assume that a higher layer SU protocol manages retransmissions of failed attempts (in the case of OPCD or BIC, the retransmission of SU packets that cannot be buffered at SUrx). Since, on average, the SU transmits with probability $\mu_{avg}=\frac{1-\rho_{0}}{\rho_{1}-\rho_{0}}\nabla_{th}$ to obey the maximum PU throughput degradation constraint, we normalize the delay to the average transmission period of the SU, $1/\mu_{avg}\simeq 5.3$ [slots]. Remarkably, OPCD outperforms BIC, both in terms of delay and throughput (Fig.~\ref{results3}), thanks to a more efficient use of the buffer at SUrx. As expected, AO outperforms both OPCD and BIC (while delivering the worst throughput, see Fig.~\ref{results}), since it does not use a buffering mechanism at SUrx, but simply retransmits packets in case of failure. The only exception is the normalized delay region $[0,1.3]$, where both OPCD and BIC outperform AO (case $B_{\max}=T_{ARQ}=\infty$). In fact, under OPCD or BIC, SUrx leverages knowledge of the PU packet to perform interference cancellation, and thus fewer attempts are needed to succeed in data transmission. Finally, we notice that the average delay of OPCD increases with $B_{\max}$, as a result of an increased buffering capability at SUrx. In general, we observe a trade-off between throughput (Fig.~\ref{results3}) and delay (Fig.~\ref{results4}) by varying $B_{\max}$.

In Fig.~\ref{results5}, we investigate the performance of the SGD algorithm developed in Sec.~\ref{learning} in a time-varying scenario with constant step-size $\beta_t$. The distance between SUtx and SUrx is kept fixed, whereas that between the SU and PU pair, $d_{SP}$, varies as described in the caption. Accordingly, the optimal value of the SU rate $R_s$ is constant, whereas the optimal value of the parameter $\nu$ depends on $d_{SP}$: when $d_{SP}$ is large (around $t\simeq 0$ and $t\simeq 10000$, the SU and PU pairs are far away from each other), $\nu^*$ is large since the SU generates little interference to the PU and thus may transmit more frequently; in contrast, when $d_{SP}$ is small (around $t\simeq 5000$, the SU and PU pairs are close to each other), $\nu^*$ is small since the SU generates strong interference to the PU, hence it may only transmit sparingly. We note that the SGD algorithm, after an initial convergence phase, closely tracks the optimal values of $R_s^*$ and $\nu^*$. Surprisingly, this is accomplished by only observing the ARQ feedback from PUrx and the channel fading realization, see (\ref{SGDalgo}).

\vone{-5mm}
\vdo{-3mm}
\section{Conclusions}\label{sec:conclu}
In this paper, we have investigated the design of optimal SU access policies that maximize the SU throughput via chain decoding, subject to an interference constraint to the PU. We have found a closed form expression of the optimal policy and of its performance, and shown that it can be expressed as a randomization among three modes of operation. We have designed an algorithm based on stochastic gradient descent to determine the optimal randomization in practical settings where the statistics of the system are unknown or vary over time. We have shown numerically that, for a 10\% interference constraint, the optimal access policy with chain decoding outperforms by 15\% a state-of-the-art scheme that does not exploit opportunistic \emph{secondary} retransmissions, and achieves up to $2\times$ improvement over a chain decoding scheme using a non-adaptive access policy instead of the optimal one.
 
\vone{-5mm}
\vdo{-3mm}
\bibliographystyle{IEEEtran}
\bibliography{IEEEabrv,References} 

\iftoggle{doublecol}{
\vdo{-5mm}
\begin{IEEEbiography}
    [{\includegraphics[width=1in,height=1.25in,clip,keepaspectratio]{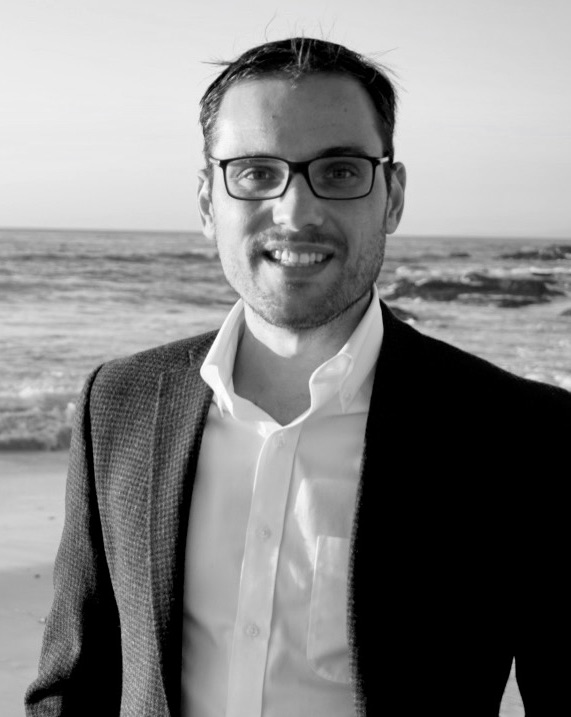}}]{Nicolo Michelusi}(S'09, M'13, SM'18) received the B.Sc. (with honors), M.Sc. (with honors) and Ph.D. degrees from the University of Padova, Italy, in 2006, 2009 and 2013, respectively, and the M.Sc. degree in Telecommunications Engineering from the Technical University of Denmark in 2009, as part of the T.I.M.E. double degree program. He was a post-doctoral research fellow at the Ming-Hsieh Department of Electrical Engineering, University of Southern California, USA, in 2013-2015. He is currently an Assistant Professor at the School of Electrical and Computer Engineering at Purdue University, IN, USA. His research interests lie in the areas of 5G wireless networks, millimeter-wave communications, stochastic optimization, distributed optimization. Dr. Michelusi serves as Associate Editor for the IEEE Transactions on Wireless Communications, and as a reviewer for several IEEE Transactions.
\end{IEEEbiography}}

\end{document}